\newtheorem{algorithm}[theorem]{Algorithm}
\title{\LARGE \bf
Security Strategies of Both Players in Asymmetric Information Zero-Sum Stochastic Games with an Informed Controller}
\author{Lichun Li, Cedric Langbort and Jeff S. Shamma \thanks{This work was supported in part by NSF grants 1619339 and 1151076 to Cedric Langbort, and by funding from King Abdullah University of Science and Technology (KAUST).
}}
\institute{Lichun Li and Cedric Langbort are with the Coordinated Science Lab, University of Illinois at Urbana-Champaign. {\tt\small lichunli,langbort@illinois.edu}. Jeff S. Shamma is with King Abdullah University of Science and Technology (KAUST),{\tt\small jeff.shamma@kaust.edu.sa}.}
\begin{document}

\maketitle
\thispagestyle{empty}

\begin{abstract}
This paper considers a zero-sum two-player asymmetric information stochastic game where only one player knows the system state, and the transition law is controlled by the informed player only. For the informed player, it has been shown that the security strategy only depends on the belief and the current stage. We provide LP formulations whose size is only linear in the size of the uninformed player's action set to compute both history based and belief based security strategies. For the uninformed player, we focus on the regret, the difference between $\mathbf{0}$ and the future payoff guaranteed by the uninformed player in every possible state. Regret is a real vector of the same size as the belief, and depends only on the action of the informed player and the strategy of the uninformed player. This paper shows that the uninformed player has a security strategy that only depends on the regret and the current stage. LP formulations are then given to compute the history based security strategy, the regret at every stage, and the regret based security strategy. The size of the LP formulations are again linear in the size of the uninformed player action set. Finally, an intrusion detection problem is studied to demonstrate the main results in this paper.
\end{abstract}

\section{INTRODUCTION}
Cyber attacks have been a serious threat to the security and privacy of individuals (e.g. Equifax data breach), companies (e.g. HBO cyberattack and Sony Pictures hack), and nations (e.g. stuxnet), and are reported to spur billions of dollars in loss \cite{cyberattackloss}. Such cyber attacks have become more stealthy, targeted, and sophisticated over the past few years. One difficulty in modelling and defending against them is that attackers often have access to a vast amount of attacking measures, which results in lack of complete information on the defender's part. Hence, we propose to model cyber security problems as games with asymmetric information, and get a systematic strategy to fight against cyber attacks.


A key element in asymmetric information games is to estimate the private information of the other players. This element is usually a probability, which is also called belief, over the other player's private information based on the history of observations. Generally speaking, a belief over the other player's private information depends on the player's strategy, which, in turn, depends on the belief. Therefore, there is always a coupling between the belief and the strategy. To decompose the coupling, common information based belief and the corresponding strategy were proposed \cite{nayyar2014common, ouyang2017dynamic, sinha2016structured, nayyar2017information}. In \cite{nayyar2014common}, Bayesian Nash equilibrium was considered. To decouple the belief from the strategy, it was assumed that the belief was strategy independent. With this assumption, asymmetric information games can be transformed to a symmetric game in which a backward induction was derived, and the Bayesian Nash equilibrium can be found by solving a one-stage Bayesian game. The idea was adopted in \cite{nayyar2017information} with a focus on zero-sum stochastic games. Both \cite{ouyang2017dynamic} and \cite{sinha2016structured} considered perfect Bayesian equilibrium which consists of a belief system and a strategy profile. The belief and the strategy need to be consistent with each other to form a perfect Bayesian equilibrium. In \cite{ouyang2017dynamic,sinha2016structured}, players' strategies were assumed to be known by each other. Based on this assumption, Ouyang et.al decomposed a stochastic game with asymmetric information and used a backward induction to find common information based perfect Bayesian equilibrium  \cite{ouyang2017dynamic}. Sinha and Anastasopoulos studied an infinite horizon discounted asymmetric information game in \cite{sinha2016structured}, and found that the common information based belief and strategy are stationary. A methodology was developed to decompose the interdependence between the belief and strategy, and to evaluate structured perfect Bayesian equilibrium.

While many previous work focused on beliefs in asymmetric information games, there is another group of works pointing out another key element in asymmetric information games \cite{de1996repeated,rosenberg1998duality,sorin2002first}. This element is a real vector of the same size as the belief, and does not depend explicitly on the other player's strategy. We call this vector `regret', because it is the difference between $\mathbf{0}$ and the future payoff guaranteed by a security strategy for every possible initial private information of the other players \cite{li2017computing,li2017efficient,li2017solving}. It was shown that the player without private information (uninformed player) has a security strategy that only depends on the regret in repeated games, plus the current stage if this is a finite stage game \cite{de1996repeated}. 

This paper focuses on asymmetric information zero-sum two-player stochastic games where only one player (informed) has access to private information (system state) which evolves following a Markovian rule controlled by the informed player only. Our goal is to provide tractable conditions for the computation of \textit{both players'} security strategies (to be defined precisely in Section \ref{section: problem statement}) for such games. More precisely, we show how to obtain LP formulations whose size is only linear in the cardinality of the uninformed player's action set, in contrast with existing approaches which do not consider the uninformed player's strategy and/or require LPs with size scaling polynomially in the cardinality of that set \cite{koller1996efficient}. For the informed player, our approach builds on the work of \cite{renault2006value}, which showed that the informed player has a security strategy that only depends on the belief and the current stage, and is independent of the action history of the uninformed player. We extend our original contribution \cite{li2014lp} by introducing an algorithm to compute belief based security strategy for the informed player. For the uninformed player, we introduce and build on the new notion of `regret', which generalizes the similar object we first considered in the context of repeated games \cite{li2017computing}. By using the dual game of the asymmetric information stochastic game, we show that in finite horizon asymmetric information stochastic games, the uninformed player has a security strategy that only depends on the regret and the current stage, and is independent of the history action of the uninformed player. The regret only depends on the action history of the informed player, and is independent of the strategy of the informed player. It is because of this property that an appropriately-sized LP can be derived, along with algorithms to compute the regret at every stage.

This paper is organized as follows. Section \ref{section: problem statement} introduces the game model. Section \ref{section: informed} and \ref{section: uninformed} introduce the security strategies of informed and uninformed players, respectively, and detail the derivations mentioned above. Finally, in Section \ref{section: simulation}, we apply our security strategy computation techniques to a game model of an intrusion detection problem.

\section{Problem statement}
\label{section: problem statement}
Let $\mathbb{R}^n$ denote the $n$-dimensional real space. For a finite set $K$, $| K |$ denotes its cardinality, and $\Delta(K)$ indicates the set of probability distributions over $K$. The symbols $\mathbf{1}$ and $\mathbf{0}$ denote vectors with all elements equal to 1 and 0, respectively. The size will be implied from context. For a vector $p$ and a matrix $Z$, we use $p(i)$ to denote the $i$th element of $p$, and $Z(i,j)$ to denote the element at the $i$th row and $j$th column of $Z$. The $i$th row and the $j$th column of $Z$ are denoted as $Z(i,:)$ and $Z(:,j)$, respectively.

A \emph{two-player zero-sum stochastic game} is specified by a six-tuple $(K,A,B,M,p_0$ $,Q)$, where
\begin{itemize}
  \item $K$ is a finite set, called the state set, whose elements are the states of the game.
  \item $A$ and $B$ are the finite action sets of player 1 and player 2, respectively.
  \item $M_k\in \mathbb{R}^{|A|\times|B|}$ is the payoff matrix if the state is $k\in K$. $M_k(a,b)$ is player 1's one stage payoff, or player 2's one stage cost if the current state is $k$ and the current actions of player 1 and 2 are $a$ and $b$, respectively.
  \item $p_0 \in \Delta(K)$ is the initial probability of the state.
  \item $Q_a \in \mathbb{R}^{|K|\times |K|}$ denotes the transition matrix if player 1 plays $a\in A$. $Q_a(k,k')$ is the conditional probability that the next state is $k'$ given the current action is $a$ and the current state is $k$.
\end{itemize}

An $N$-stage asymmetric information stochastic game with a single controller is played as follows. At the beginning of stage $t= 1,\ldots,N$, the state $k_t$ is chosen by nature according to the initial probability $p_0$ if this is the first stage, or the transition law $Q_{a_{t-1}}(k_{t-1},:)$ otherwise. The current state $k_t$ is only observed by player 1, and hence player 1 is called the informed player while player 2 is called the uninformed player. Both players choose their actions $a_t$ and $b_t$ simultaneously, which are observable by both players. The resulting one stage payoff of player 1, i.e. the one stage cost of player 2, is $M_{k_t}(a_t,b_t)$. We assume both players have perfect recall.

At the beginning of stage $t$, the available state history and action history of players 1 and 2 are indicated by $S_t=\{k_1,\ldots,k_t\}$, $I_t=\{a_1,\ldots,a_{t-1}\}$ and $J_t=\{b_1,\ldots,b_{t-1}\}$, respectively. Player 1's behavior strategy is an element $\sigma=(\sigma_t)_{t=1}^N$, where for each $t$, $\sigma_t: K^t\times A^{t-1}\times B^{t-1} \rightarrow \Delta(A)$. Player 2's behavior strategy is an element $\tau=(\tau_t)_{t=1}^N$, where for each $t$, $\tau_t: A^{t-1}\times B^{t-1} \rightarrow \Delta(B)$. Denote by $\Sigma_N$ and $\mathcal{T}_N$ the set of $N$-stage strategies of player $1$ and $2$, respectively.

Every quadruple $(p_0,\sigma,\tau,Q)$ induces a probability $P_{p_0,\sigma,\tau,Q}$ over the set of plays $(K \times A\times B)^N$. We denote by $\mathbf{E}_{p_0,\sigma,\tau,Q}$ the corresponding expectation operator. The \emph{total payoff} of the $N$-stage asymmetric information stochastic game is defined as
\begin{align}
  \gamma_N(p_0,\sigma,\tau)=\mathbf{E}_{p_0,\sigma,\tau,Q}\left(\sum_{t=1}^N M_{k_t}(a_t,b_t) \right)
\end{align}
The $N$-stage asymmetric information stochastic game $\Gamma_N(p_0)$ is defined as the zero-sum game with strategy spaces $\Sigma_N$ and $\mathcal{T}_N$, and payoff function $\gamma_N(p_0,\sigma,\tau)$.

In this game, player 1 wants to maximize the total payoff, while player 2 wants to minimize it. Therefore, player 1 has a security level $\underline{v}_N(p_0)$, which is also called the maxmin value of the game and defined as
\begin{align*}
  \underline{v}_N(p_0)=\max_{\sigma\in \Sigma_N}\min_{\tau\in \mathcal{T}_N} \gamma_N(p_0,\sigma,\tau).
\end{align*}
A strategy $\sigma^*$ that guarantees player 1's security level, i.e. $\min_{\tau\in \mathcal{T}} \gamma_N(p_0,\sigma^*,\tau)=\underline{v}_N(p_0)$, is called a security strategy of player 1. Player 2 also has a security level $\bar{v}_n(p_0)$ which is defined as
\begin{align*}
  \bar{v}_N(p_0)=\min_{\tau\in \mathcal{T}_N}\max_{\sigma\in \Sigma_N}\gamma_N(p_0,\sigma,\tau).
\end{align*}
Player 2's security level is also called the minmax value of the game, and a strategy $\tau^*\in \mathcal{T}$ that guarantees the security level of player 2 is a security strategy of player 2. Since this is a finite game (finite horizon, action sets, and state set) and behavior strategies are considered, its maxmin value and minmax value match \cite{sorin2002first}. In this case, we say the game has a value $v_N(p_0)=\underline{v}_N(p_0)=\bar{v}_N(p_0)$, and the security strategy pair $(\sigma^*,\tau^*)$ is the saddle point of the game.

\section{Security strategies of the informed player}
\label{section: informed}
The security strategies of the informed players in asymmetric information stochastic games have been thoroughly studied in previous work \cite{rosenberg2004stochastic,renault2006value,li2014lp}. For completeness of this paper, we will review the related results in this section. Interested readers can find proofs in the corresponding references.

\subsection{History based security strategy and its LP formulation}
\label{subsec: history, informed}

Renault showed that the informed player's security strategy only depends on the current state and its own action history \cite{renault2006value}. We state this property in the following lemma.
\begin{proposition}[Proposition 5.1 in \cite{renault2006value}]
  \label{lemma: H^B independent strategy, informed}
Consider an $N$-stage asymmetric information stochastic game $\Gamma_N(p_0)$. The informed player has a security strategy that, at every stage $t$, only depends on the current state $k_t$, and on the actions history $I_t$ of the informed player.
\end{proposition}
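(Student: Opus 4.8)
The plan is to reduce the asymmetric-information stochastic game to a sequence of auxiliary games with a vectorized state, and to exploit concavity of the value in the belief together with a splitting/recursive argument. Because player~1 is the only player who observes the state and the only player who controls the transition, the relevant sufficient statistic from player~1's side is the pair $(p_t, I_t)$, where $p_t \in \Delta(K)$ is the belief on the current state generated by the common information (the action history $I_t$ of player~1, which is observed by both players) under the candidate strategy. I would first set up the standard recursive (dynamic-programming) decomposition for $\underline v_N$: writing $\underline v_N(p_0)$ as the value of a one-stage game in which player~1 chooses, for each state $k$, a mixed action $x_k \in \Delta(A)$ (equivalently, a joint object $(\bar x, (p^a)_{a\in A})$ consisting of the marginal $\bar x \in \Delta(A)$ and the posterior beliefs $p^a$ on $K$ after each action $a$ of player~1 is revealed), player~2 chooses $y \in \Delta(B)$, and the continuation payoff is $\underline v_{N-1}(\tilde p^a)$ with $\tilde p^a$ the one-step-transported posterior $\tilde p^a = (p^a)^\top Q_a$. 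The key structural facts to record are: (i) the continuation belief $\tilde p^a$ depends only on player~1's action $a$ and the splitting chosen by player~1, not on player~2's action; (ii) $\underline v_n(\cdot)$ is concave on $\Delta(K)$ for every $n$ (this is the usual consequence of player~1 being able to randomize its announced posterior — a ``splitting'' lemma).

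Second, I would prove by backward induction on $n$ that there is an optimal strategy in this recursive formulation depending only on $(k_t, I_t)$. The base case $n=0$ is vacuous. For the inductive step: in the one-stage game at belief $p$, the existence of an optimal splitting $(\bar x,(p^a))$ for player~1 follows from compactness of $\Delta(A)\times\Delta(K)^{|A|}$ and continuity of the payoff, the latter using the inductive concavity/continuity of $\underline v_{N-1}$. Player~1's optimal play at stage~$t$ is then: given the current state $k_t$, play the mixed action $x_{k_t}$ dictated by the chosen splitting at the current belief $p_t$; the next belief $p_{t+1}$ is then a deterministic function of $p_t$ and the realized $a_t$ (via Bayes' rule composed with $Q_{a_t}$), so the whole construction is driven only by the sequence of realized informed-player actions, i.e. by $I_t$, together with $k_t$. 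Crucially, player~2's actions never enter the belief update nor the choice of splitting, so no dependence on $J_t$ is introduced. One then checks that the strategy so constructed indeed guarantees $\underline v_N(p_0)$ against every $\tau$ — this is the routine ``the DP strategy achieves the DP value'' verification, summing the one-stage inequalities along any play and taking expectations.

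The main obstacle — and the only genuinely non-routine point — is establishing the recursive decomposition itself, i.e.\ that
\[
\underline v_N(p) \;=\; \max_{(\bar x,(p^a))}\ \min_{y\in\Delta(B)}\ \Bigl[\ \textstyle\sum_{a} \bar x(a)\, \langle p^a, M^{a,y}\rangle \;+\; \sum_a \bar x(a)\, \underline v_{N-1}\bigl((p^a)^\top Q_a\bigr)\ \Bigr],
\]
where the max is over splittings of $p$ (so that $\sum_a \bar x(a) p^a = p$) and $M^{a,y}$ is the appropriate one-stage payoff vector. The subtlety is that player~1's stage-$t$ information is the realized state $k_t$, not the belief, so one must argue that restricting player~1 to strategies measurable with respect to $(k_t,I_t)$ loses nothing: the ``$\le$'' direction is immediate (fewer strategies), while ``$\ge$'' requires showing that any history-dependent strategy of player~1 can be replaced, without decreasing the guaranteed payoff, by one that only tracks the belief — this is exactly where concavity of $\underline v_{N-1}$ and Jensen/splitting are used, to collapse the extra randomization over past states into a randomization over announced posteriors. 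Since this is Proposition~5.1 of \cite{renault2006value}, I would cite that development for the delicate measurability-reduction step and present the backward induction above as the skeleton.
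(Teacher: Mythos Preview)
The paper does not actually prove this proposition: it is stated as Proposition~5.1 of \cite{renault2006value} and the surrounding text explicitly directs the reader to that reference for proofs (``Interested readers can find proofs in the corresponding references''). So there is no in-paper argument to compare against.

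That said, your sketch is the correct and standard route, and it is essentially the argument behind the cited result. The recursive formula you write down coincides (after the change of variables $\Pi(k,a)=p(k)X(a,k)$, $\bar x(a)=\sum_k p(k)X(a,k)$, $p^a=\phi(p,X,a)$) with the recursion the paper later quotes from \cite{renault2006value} in Theorem~\ref{theorem: sufficien statistics, informed}. Your two structural observations---that the posterior update depends only on player~1's realized action and chosen splitting, and that $v_n(\cdot)$ is concave on $\Delta(K)$---are exactly the ingredients that make the $J_t$-independence go through; the paper uses both of these facts elsewhere (the belief update~(\ref{eq: belief update})--(\ref{eq: p+}) and the concavity of $v_n$ invoked in the proof of Proposition~\ref{lemma: recursive formula transient result}). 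Your identification of the ``$\ge$'' direction of the recursion as the only non-routine step, and your plan to handle it by the splitting/Jensen argument, is also right; this is precisely the place where the single-controller assumption (transitions depend only on $a$) is used, since otherwise the continuation belief would depend on $b$ and the reduction would fail. In short: nothing to correct---your outline matches the development in \cite{renault2006value} that the paper is citing.
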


Based on this property, by mathematical induction, \cite{li2014lp} presented an LP whose size is only linear with respect to the size of the uninformed player's action set. Proposition \ref{lemma: H^B independent strategy, informed} indicates that there is no loss of generality in only considering the informed player's behavior strategies that depend on $k_t$ and $I_t$ only. Therefore, for the rest of this paper, we only consider informed player's behavior strategy $\sigma_t$ as a function from $K\times A^{t-1}$ to $\Delta(A)$.

Before presenting the simplified LP, we first define a matrix variable $Z_{I_t}\in \mathbb{R}^{|A|\times |K|}$ and a scalar variable $\ell_{I_t}\in \mathbb{R}$. Let $Z=(Z_{I_t})_{I_t\in A^{t-1}, t=1,\ldots,N}$ and $\ell=(\ell_{I_t})_{I_t\in A^{t-1}, t=1,\ldots,N}$. Denote the sets of all possible values that $Z$ and $\ell$ can take by $\mathcal{Z}$ and $L$. The history based security strategy of the informed player can be computed according to the following theorem.
\begin{theorem}[Theorem III.3 in \cite{li2014lp}]
  \label{theorem: history based strategy}
Consider an $N$-stage asymmetric information stochastic game $\Gamma_N(p_0)$ with the initial probability $p_0$. The game value $v_N(p_0)$ of $\Gamma_N(p_0)$ satisfies
\begin{align}
  v_N(p_0)=&\max_{Z\in \mathcal{Z},\ell\in L}\sum_{t=1}^N \sum_{I_t\in A^{t-1}} \ell_{I_t} \label{eq: LP informed 1}\\
  s.t.& \sum_{k\in K}M_k^T Z_{I_t}(:,k) \geq  \ell_{I_t} \mathbf{1}, && \forall I_t\in A^{t-1} ,\forall t=1,\ldots,N \\
&  \mathbf{1}^T Z_{I_t}(:,k) = Z_{I_{t-1}}(a,:)Q_a(:,k),&& \forall I_t=(I_{t-1},a)\in A^{t-1},\forall k\in K,\\
&&&\forall t=2,\ldots,N \\
& \mathbf{1}^T Z_{I_1}(:,k)=p_0(k),&& \forall k\in K \\
& Z_{I_t}(:,k) \geq  \mathbf{0} , &&\forall k\in K, \forall I_t\in A^{t-1},\forall t=1,\ldots,N \label{eq: LP informed 5}
\end{align}

Moreover, a security strategy $\sigma^*_t(k,I_t)$ of the informed player at stage $t$ is
\begin{align}
  \sigma^*_t(k,I_t)=\left\{
                        \begin{array}{ll}
                       \frac{Z^*_{I_t}(:,k)}{ \mathbf{1}^T Z_{I_t}^*(:,k)}, & \hbox{if $Z_{I_t}^*(:,k)\neq \mathbf{0}$;} \\
                       \mathbf{0}, & \hbox{$\mathrm{othewise}$}
                        \end{array}
                       \right. \label{eq: security strategy informed}
\end{align}
where $Z^*$ is the optimal solution of the LP formulation (\ref{eq: LP informed 1}-\ref{eq: LP informed 5}).
\end{theorem}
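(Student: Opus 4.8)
The plan is to give the variables $Z$ and $\ell$ a probabilistic meaning, reduce $\max_\sigma\min_\tau\gamma_N$ to an optimization over those variables, and then recognize that optimization as the linear program (\ref{eq: LP informed 1})--(\ref{eq: LP informed 5}). The reference \cite{li2014lp} carries this out by an induction on the horizon $N$; I would instead argue directly through this change of variables. By Proposition~\ref{lemma: H^B independent strategy, informed} there is no loss of generality in restricting the informed player to strategies $\sigma_t\colon K\times A^{t-1}\to\Delta(A)$, so $\underline{v}_N(p_0)=\max_\sigma\min_\tau\gamma_N(p_0,\sigma,\tau)$ over such $\sigma$. The one structural fact I would isolate first: because the transition law is driven by the informed player's action only and $\sigma_t$ depends only on $(k_t,I_t)$, the joint law of the process $(k_t,I_t,a_t)_{t=1}^N$ under $P_{p_0,\sigma,\tau,Q}$ does not depend on $\tau$ at all.

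First, fix $\sigma$ and set, for each $t$ and each $I_t\in A^{t-1}$, $Z_{I_t}(a,k):=P_{p_0,\sigma}\bigl(k_t=k,\ a_t=a,\ (a_1,\dots,a_{t-1})=I_t\bigr)$. Nonnegativity is immediate; the definition of $p_0$ gives $\sum_a Z_{I_1}(a,k)=p_0(k)$; and Bayes' rule together with the single-controller transition rule (the last coordinate $a$ of $I_t=(I_{t-1},a)$ being the action that governs the $(t-1)\to t$ transition) gives $\mathbf 1^TZ_{I_t}(:,k)=Z_{I_{t-1}}(a,:)Q_a(:,k)$, so $Z\in\mathcal Z$; summing this flow relation over $k$ and using that each $Q_a$ is stochastic shows $\sum_k\mathbf 1^TZ_{I_t}(:,k)=P_{p_0,\sigma}(I_t)$ and that these masses sum to $1$ over $I_t\in A^{t-1}$, so no normalization constraint need be imposed. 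Next, writing $\gamma_N=\sum_t\mathbf E\,M_{k_t}(a_t,b_t)$ and using that $(k_t,a_t)$ is conditionally independent of $(b_1,\dots,b_t)$ given $I_t$, the uninformed player's best response at stage $t$ and history $I_t$ is a pure action minimizing $b\mapsto\sum_{k,a}Z_{I_t}(a,k)M_k(a,b)=\bigl(\sum_kM_k^TZ_{I_t}(:,k)\bigr)(b)$, and these myopic choices do not interact across stages or across histories; hence
\[
 \min_\tau\gamma_N(p_0,\sigma,\tau)=\sum_{t=1}^N\sum_{I_t\in A^{t-1}}\min_{b\in B}\Bigl(\sum_{k\in K}M_k^TZ_{I_t}(:,k)\Bigr)(b)=\max_{\ell\ \text{feasible for }Z}\ \sum_{t=1}^N\sum_{I_t\in A^{t-1}}\ell_{I_t}.
\]

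Then I would prove the converse inclusion: every $Z\in\mathcal Z$ equals $Z(\sigma)$ for the strategy $\sigma$ defined by (\ref{eq: security strategy informed}), the ``otherwise'' branch being assigned on histories that carry probability zero and hence being irrelevant. This is a short induction on $t$ whose base case uses $\sum_a Z_{I_1}(a,k)=p_0(k)$ and whose step propagates the marginals through the flow constraints. Combining the two directions,
\[
 \underline{v}_N(p_0)=\max_\sigma\min_\tau\gamma_N=\max_{Z\in\mathcal Z}\ \max_{\ell\ \text{feasible}}\ \sum_{t=1}^N\sum_{I_t\in A^{t-1}}\ell_{I_t},
\]
which is exactly (\ref{eq: LP informed 1})--(\ref{eq: LP informed 5}) (the objective and all constraints being linear in $(Z,\ell)$); since $\Gamma_N(p_0)$ is finite and behavior strategies are used, $\underline{v}_N=\bar{v}_N=v_N$, and an optimal $Z^*$ yields the security strategy (\ref{eq: security strategy informed}) through the correspondence just built.

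The step I expect to be the main obstacle is the stagewise decomposition of $\min_\tau$ into a family of $I_t$-indexed, purely myopic scalar minimizations: one must check carefully that the uninformed player loses nothing by ignoring its own past actions $J_t$ and by not randomizing, which rests entirely on the single-controller hypothesis together with the $I_t$-measurability of $\sigma$ granted by Proposition~\ref{lemma: H^B independent strategy, informed}. A secondary nuisance is conditioning on histories $I_t$ of probability zero under $\sigma$; they contribute nothing to $\gamma_N$ and are precisely the ones handled by the ``otherwise'' branch of (\ref{eq: security strategy informed}), so they cause no real difficulty but require an explicit word. Everything else is bookkeeping with joint distributions.
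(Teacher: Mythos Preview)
Your proposal is correct. Note, however, that the present paper does not actually prove this theorem: it imports it from \cite{li2014lp} and only remarks that the original argument there proceeds ``by mathematical induction'' on the horizon $N$ (presumably through the recursive formula of Theorem~\ref{theorem: sufficien statistics, informed}). So there is no in-paper proof to match line by line.

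Your route is genuinely different from the inductive one the paper alludes to. You parametrize the informed player's strategy space directly by the joint laws $Z_{I_t}(a,k)=P_{p_0,\sigma}(k_t=k,\,a_t=a,\,I_t)$, check that the LP constraints are precisely the feasibility conditions for such a law, and then collapse $\min_\tau$ into a sum of independent scalar minimizations by exploiting the single-controller transition together with the $J_t$-independence of $\sigma$ granted by Proposition~\ref{lemma: H^B independent strategy, informed}. This is essentially a sequence-form/realization-plan argument, and it is in the same spirit as the paper's own proof of Theorem~\ref{theorem: LP, history based, primal, uninformed} for the uninformed side. Compared with an induction on $N$, your approach makes the bijection $\sigma\leftrightarrow Z$ explicit in one shot and reads off the security strategy (\ref{eq: security strategy informed}) immediately from that bijection; the inductive proof, by contrast, ties the LP more tightly to the Bellman recursion (\ref{eq: recursive formula primal 1})--(\ref{eq: recursive formula primal 2}), which is what the paper leans on anyway for Algorithm~\ref{Algorithm: belief based, informed}. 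Both are of comparable length; yours is the more elementary, while the inductive route is more in keeping with how the rest of Section~\ref{section: informed} is organized.
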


\subsection{Belief based security strategy and its LP based algorithm}
The memory required to record the history based security strategy increases exponentially with $N$ in game $\Gamma_N(p_0)$. Therefore, a sufficient statistics based security strategy is of interest, especially when $N$ is large. When studying the game value of a finite stage game, Renault showed that at stage $t$, the sufficient statistics of the informed player is the stage index and the conditional probability $p_t$ of the current state given the action history of the informed player. The conditional probability $p_t$ is also called the belief state which is updated as follows.
\begin{align}
  p_{t+1}=\phi^T(p_t,X_t,a)Q^{a}, \label{eq: belief update}
\end{align}
where $\phi:K\times \Delta(A)^{K} \times A \rightarrow \Delta(K)$ is a vector valued function whose $k$th element is
\begin{align}
  \phi_k(p_t,X_t,a)=\frac{p_t(k) X_t(a,k)}{\bar{x}(p_t,x_t,a)}, \forall k\in K \label{eq: p+}
\end{align}
$X_t(:,k)=\sigma_t(k,I_t)$, and $\bar{x}(p_t,X_t,a)=\sum_{k\in K}p_t(k) X_t(a,k)$ is the probability that player 1 plays $a$ at stage $t$.

Based on the belief state $p_t$, a recursive formula to compute the game value $v_N(p_0)$ was provided in \cite{renault2006value}, and the sufficient statistics of the informed player was also given at the same time.
\begin{theorem}[Proposition 5.1 and Remark 5.2 in \cite{renault2006value}]
  \label{theorem: sufficien statistics, informed}
Consider an $n$ stage asymmetric information stochastic game $\Gamma_n(p)$. Its game value $v_n(p)$ satisfies the following recursive formula.
\begin{align}
&v_n(p)\\
= &\max_{X\in \Delta(A)^{|K|}} \min_{\hat{y}\in\Delta(B)}\left(\sum_{k\in K}p(k) X^T(:,k) M_k \hat{y}+\sum_{a\in A}\bar{x}(p,x,a)v_{n-1}(\phi^T(p,X,a)Q_a)\right) \label{eq: recursive formula primal 1}\\
= &\min_{\hat{y}\in\Delta(B)} \max_{X\in \Delta(A)^{|K|}} \left(\sum_{k\in K}p(k) X^T(:,k) M_k \hat{y}+\sum_{a\in A}\bar{x}(p,x,a)v_{n-1}(\phi^T(p,X,a)Q_a) \right) \label{eq: recursive formula primal 2}
\end{align}
Moreover, the informed player has a security strategy at stage $t$ that only depends on stage $t$ and belief state $p_t$.
\end{theorem}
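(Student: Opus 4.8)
The plan is to prove the recursion (\ref{eq: recursive formula primal 1})--(\ref{eq: recursive formula primal 2}) and the sufficient-statistic claim \emph{simultaneously} by backward induction on the horizon $n$, carrying three invariants: (i) $v_n$ satisfies the recursive formulas; (ii) $p\mapsto v_n(p)$ is concave and continuous on $\Delta(K)$; and (iii) in $\Gamma_n(p)$ the informed player has a security strategy that, at each stage, depends only on the current belief and the number of remaining stages. The base case $n=0$ is immediate ($v_0\equiv 0$), as is $n=1$: $\Gamma_1(p)$ is the one-stage Bayesian game with value $\max_{X}\min_{\hat y}\sum_{k}p(k)X^{T}(:,k)M_{k}\hat y$, concave in $p$ as a minimum of affine functions.

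For the inductive step, introduce the one-stage-plus-continuation functional
\[
  f_{p}(X,\hat y)=\sum_{k\in K}p(k)X^{T}(:,k)M_{k}\hat y+\sum_{a\in A}\bar x(p,X,a)\,v_{n-1}\!\left(\phi^{T}(p,X,a)Q_{a}\right),
\]
with the convention that the $a$-th summand is $0$ when $\bar x(p,X,a)=0$. It is affine in $\hat y$. It is also concave in $X$: writing $u_a(X)$ for the vector with $k'$-th component $\sum_k p(k)X(a,k)Q_a(k,k')$, which is linear in $X$ and satisfies $\mathbf 1^T u_a(X)=\bar x(p,X,a)$, the $a$-th summand equals $\bar x(p,X,a)\,v_{n-1}\big(u_a(X)/\bar x(p,X,a)\big)$, i.e. the perspective of the concave map $q\mapsto v_{n-1}(q)$ (invariant (ii)) precomposed with the affine map $X\mapsto(u_a(X),\bar x(p,X,a))$, hence concave. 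Since $\Delta(A)^{|K|}$ and $\Delta(B)$ are convex and compact and $f_p$ is, in $X$, upper semicontinuous, Sion's minimax theorem gives equality of (\ref{eq: recursive formula primal 1}) and (\ref{eq: recursive formula primal 2}); call the common value $W_n(p)$ and fix a saddle point $(X^\ast,\hat y^\ast)$ with $X^\ast=X^\ast(p,n)$.

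Next record the decomposition of plays. By Proposition \ref{lemma: H^B independent strategy, informed} we may restrict the informed player to strategies depending only on the current state and its own action history; any such $\sigma\in\Sigma_n$ then factors into a first-stage matrix $X\in\Delta(A)^{|K|}$ (with $X(:,k)=\sigma_1(\cdot\mid k)$) and, for each action $a$ observed at stage $1$, a continuation $\sigma^{\mid a}\in\Sigma_{n-1}$; likewise $\tau$ factors into $\hat y\in\Delta(B)$ and continuations $\tau^{\mid a}$. A Bayes computation shows that, conditioned on $a_1=a$ (probability $\bar x(p,X,a)$), the uninformed player's posterior on $k_2$ is exactly $\phi^{T}(p,X,a)Q_{a}$ (posterior on $k_1$ via (\ref{eq: p+}), then the transition $Q_a$), and the induced play measure factorizes so that $\gamma_n(p,\sigma,\tau)=\sum_k p(k)X^{T}(:,k)M_k\hat y+\sum_a \bar x(p,X,a)\,\gamma_{n-1}(\phi^{T}(p,X,a)Q_{a},\sigma^{\mid a},\tau^{\mid a})$. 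Two inequalities now follow. \emph{Lower bound and invariant (iii):} let player $1$ play $X^\ast$ at stage $1$ and, after observing $a$, the belief-and-horizon based security strategy of $\Gamma_{n-1}(\phi^{T}(p,X^\ast,a)Q_{a})$ given by (iii); for every $\tau$, the decomposition and the induction hypothesis give $\gamma_n(p,\sigma^\ast,\tau)\ge f_p(X^\ast,\hat y)\ge\min_{\hat y}f_p(X^\ast,\hat y)=W_n(p)$, and this strategy depends only on the belief updated by (\ref{eq: belief update}) and the remaining horizon. \emph{Upper bound:} for arbitrary $\sigma$ with first-stage part $X$, let player $2$ play a minimizer $\hat y$ of $f_p(X,\cdot)$ and, after observing $a$, a security strategy of $\Gamma_{n-1}(\phi^{T}(p,X,a)Q_{a})$ (legitimate, since in $\min_\tau\gamma_n$ the response may depend on $\sigma$); then $\gamma_n(p,\sigma,\tau)\le f_p(X,\hat y)\le\max_X\min_{\hat y}f_p(X,\hat y)=W_n(p)$, so $\underline v_n(p)\le W_n(p)$. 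As $\Gamma_n(p)$ has a value, $v_n(p)=\underline v_n(p)=W_n(p)$, which is invariant (i), and (iii) was established along the way.

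It remains to re-establish concavity and continuity of $v_n=W_n$ to close the induction, and this is the step I expect to be the crux. I would not try to read it off the recursion (the outer $\max_X\min_{\hat y}$ need not preserve concavity in $p$); instead I would invoke the classical Aumann--Maschler observation that the informed player can garble its own information: for $p=\lambda p^{1}+(1-\lambda)p^{2}$, the informed player, upon observing $k$, privately draws a ``scenario'' $i\in\{1,2\}$ with probabilities proportional to $\lambda p^{1}(k)$ and $(1-\lambda)p^{2}(k)$ (so the conditional law of $k$ given $i$ is $p^{i}$) and then plays the security strategy of $\Gamma_n(p^{i})$; against any $\tau$ this yields $\gamma_n(p,\hat\sigma,\tau)=\lambda\,\gamma_n(p^{1},\sigma^{\ast,1},\tau)+(1-\lambda)\,\gamma_n(p^{2},\sigma^{\ast,2},\tau)\ge\lambda v_n(p^{1})+(1-\lambda)v_n(p^{2})$, so $v_n$ is concave; boundedness ($|v_n(p)|\le n\max_{k,a,b}|M_k(a,b)|$) gives continuity on the interior, and continuity up to the boundary is standard. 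This concavity is exactly what feeds the perspective-function argument in the inductive step, so it is the load-bearing part; everything else is bookkeeping on the factorization of plays.
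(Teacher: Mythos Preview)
The paper does not prove this theorem: it is quoted from \cite{renault2006value} (Proposition~5.1 and Remark~5.2), and the surrounding section explicitly directs the reader to the original reference for proofs. So there is no ``paper's own proof'' to compare against.

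That said, your proposal is a correct and self-contained argument, and it follows the standard route one would expect for this class of results. The load-bearing pieces---(a) concavity of $v_{n-1}$ so that the perspective function $\bar x\,v_{n-1}(u/\bar x)$ is concave in $X$, enabling Sion's theorem for the interchange (\ref{eq: recursive formula primal 1})--(\ref{eq: recursive formula primal 2}); (b) the play-factorization identity $\gamma_n=\text{(stage 1)}+\sum_a\bar x(p,X,a)\,\gamma_{n-1}(\phi^T(p,X,a)Q_a,\cdot,\cdot)$; and (c) the Aumann--Maschler splitting to re-establish concavity of $v_n$---are all sound. Two small points worth tightening: first, in the factorization you should acknowledge that player~2's continuation a priori depends on $b_1$ as well, i.e.\ $\tau^{\mid a,b}$; this is harmless because in the lower bound $\sigma^{\ast\mid a}$ guarantees $v_{n-1}$ against every $\tau^{\mid a,b}$, and in the upper bound you are free to pick $\tau^{\mid a}$ independent of $b$. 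Second, your appeal to Proposition~\ref{lemma: H^B independent strategy, informed} is legitimate as stated in the paper, but note that in \cite{renault2006value} both statements come from the same Proposition~5.1; if you want a fully self-contained argument you can instead observe directly that, by the Markov property, $k_1$ is conditionally independent of the future given $k_2$, so knowledge of $k_1$ is useless to player~1 in the continuation. Continuity of $v_n$ up to the boundary is most cleanly obtained from the Lipschitz bound $|v_n(p)-v_n(p')|\le n\max_{k,a,b}|M_k(a,b)|\,\|p-p'\|_1$, which follows immediately from linearity of $\gamma_n$ in $p$.
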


Based on this theorem, one can derive an algorithm to compute the belief based security strategy of the informed player as follows.
\begin{algorithm} \hfill{}
\label{Algorithm: belief based, informed}
\begin{enumerate}
  \item Initialization
    \begin{enumerate}
      \item Read payoff matrices $M$, transition matrices $Q$, time horizon $N$ and initial probability $p_0$.
      \item Set $t=1$ and $p_t=p_0$. Read $k_t$.
    \end{enumerate}
  \item Solve LP (\ref{eq: LP informed 1}-\ref{eq: LP informed 5}) by replacing $N$ and $p_0$ by $N+1-t$ and $p_t$. A security strategy at $t$ is $\sigma_1^*(k_t,I_1)$ computed according to (\ref{eq: security strategy informed}).
  \item Draw an action $a_t$ according to the security strategy $\sigma_1^*(k_t,I_1)$.
  \item Update $p_{t+1}$ according to (\ref{eq: belief update}).
  \item Update $t=t+1$, read $k_t$.
  \item If $t\leq N$, go to step 2). Otherwise, end.
\end{enumerate}
\end{algorithm}

Compared with history based security strategy, the belief based security strategy only needs to record stage $t$ and belief state $p_t$ whose size is fixed and much smaller than $I_t$ which is recorded in history based security strategy, especially when the time horizon $N$ is large. The belief based security strategy also provides us a research direction in dealing with infinite horizon asymmetric information stochastic games, which was studied in \cite{li2015efficient} for the discounted case.

\section{Security strategies of the uninformed player}
\label{section: uninformed}
While the security strategies of informed players in asymmetric information stochastic games were well studied in the previous work, only a few papers studied the security strategies of uninformed players \cite{de1996repeated,rosenberg1998duality}. Both references studied the security strategies of the uninformed player using the dual games of the corresponding game model. We follow a similar path in this section, noting that our game model is more general than that considered in \cite{de1996repeated}, and incomparable with that of \cite{rosenberg1998duality}.

\subsection{Regret based security strategies and $J_t$ independent security strategies}
\label{subsection: strategy property, uninformed}
Let us first introduce the dual game of the asymmetric information stochastic game $\Gamma_N(p)$. A dual asymmetric information stochastic game is specified by a six-tuple $(K,A,B,M,Q,\alpha)$, where $K,A,B,M,Q$ are defined in the same way as in the primal game $\Gamma_N(p)$, and $\alpha\in \mathbb{R}^{|K|}$ is the initial vector payoff of player 1. The dual game is played exactly in the same way as the primal game except that at the first stage, player 1, instead of Nature, chooses the state. Let $p\in \Delta(K)$ be player 1's mixed strategy to choose the initial state. The total payoff in the dual game is
\begin{align}
  g_N(\alpha,\sigma,\tau)=\mathbf{E}_{p,\sigma,\tau,Q}\left(\alpha(k_1)+\sum_{t=1}^N M_{k_t}(a_t,b_t)\right).
\end{align}
The $N$-stage dual asymmetric information stochastic game $G_N(\alpha)$ is defined as a two-player zero-sum game with strategy spaces $\Delta(K)\times \Sigma_N$ and $\mathcal{T}_N$, and payoff function $g_N(p,\sigma,\tau)$.

The dual game $G_N(\alpha)$ is still a finite game. Since behavior strategies are considered, the dual game has a value, i.e. $ w_N(\alpha)=\max_{\sigma\in \Sigma_N}\min_{\tau\in \mathcal{T}_N} g_N(\alpha,\sigma,\tau)$ $= \min_{\tau\in \mathcal{T}_N} \max_{\sigma\in \Sigma_N} g_N(\alpha,\sigma,\tau)$. Before studying the relation between the game values of the primal game and the dual game, we introduce the initial regret of the primal game $\Gamma_n(p)$ as follows. Denote the informed player strategy from stage $2$ to $N$ as $\sigma_{2:N}\in \Sigma_{2:N}$, where $\Sigma_{2:N}$ is the set of all possible values that $\sigma_{2:N}$ can take. Let $\tau^*$ be the uninformed player's security strategy in primal game $\Gamma_n(p)$. The initial regret $\hat{\alpha}_0\in \mathbb{R}^{|K|}$ of a primal game $\Gamma_n(p)$ is defined as
\begin{align}
\hat{\alpha}_0(k)=-\max_{\substack{\sigma_1(k,\emptyset)\in \Delta(A)\\ \sigma_{2:N}\in \Sigma_{2:N}}} \mathbf{E}_{(\sigma_1(k,\emptyset),\sigma_{2:N}),\tau^*,Q}(\sum_{t=1}^n M_{k_t}(a_t,b_t)|k_1=k).\label{eq: initial regret}
\end{align}
The $k$th element of the initial regret is the difference between $0$, the total payoff realized at the beginning of stage 1, and the security level that the uninformed player's security strategy can guarantee if the game state is $k$. Later, we will see that if we use the initial regret of primal game $\Gamma_n(p)$ as the initial vector payoff in the dual game, the security strategy of the uninformed player in the dual game is also the security strategy of the uninformed player in the primal game. Moreover, the game value $v_n(p)$ of the primal game equals to the game value $w_n(\hat{\alpha}_0)$ minus $p^T \hat{\alpha}_0$. In this way, we can evaluate the game value of the primal game from the game value of the dual game. Now, let us introduce the relations between the game values of the primal game and the dual game.
\begin{theorem}
  \label{theroem: game value relations}
Consider an $n$-stage asymmetric information stochastic game $\Gamma_n(p)$ and its dual game $G_n(\alpha)$. Let $v_n(p)$ and $w_n(\alpha)$ be the game values of $\Gamma_n(p)$ and $G_n(\alpha)$. We have
\begin{align}
  v_n(p)=\min_{\alpha\in \mathbb{R}^{|K|}}\{w_n(\alpha)-p^T \alpha\}, \label{eq: v n} \\
  w_n(\alpha)=\max_{p\in \Delta(K)}\{v_n(p)+p^T\alpha\}. \label{eq: w n}
\end{align}
Moreover, the initial regret $\hat{\alpha}_0$ of the primal game $\Gamma_n(p)$ is an optimal solution to the minimum problem (\ref{eq: v n}).
\end{theorem}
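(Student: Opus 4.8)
The plan is to establish \eqref{eq: v n} and \eqref{eq: w n} as a convex-duality (Fenchel/Legendre) pair, and then identify the optimizer in \eqref{eq: v n} as the initial regret $\hat\alpha_0$ by a direct game-theoretic argument. First I would record the two structural facts we need about the maps $p\mapsto v_n(p)$ and $\alpha\mapsto w_n(\alpha)$: $v_n$ is concave on $\Delta(K)$ (a standard splitting/concavification argument — the informed player can randomize over which state to reveal, so the value is at least the average of the values), while $w_n$ is convex and $\mathbf{1}$-Lipschitz (indeed $w_n(\alpha+c\mathbf{1}) = w_n(\alpha)+c$ and $w_n$ is monotone and convex in $\alpha$, since it is a max over $\sigma$ and $p$ of functions affine in $\alpha$). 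Given concavity of $v_n$, the relation $w_n(\alpha)=\max_{p\in\Delta(K)}\{v_n(p)+p^T\alpha\}$ is exactly the statement that $w_n$ is the concave conjugate of $v_n$, and then \eqref{eq: v n} follows by the biconjugation theorem, $v_n(p)=\min_\alpha\{w_n(\alpha)-p^T\alpha\}$, since a concave upper-semicontinuous function on the simplex equals its biconjugate. So the analytic core reduces to proving \eqref{eq: w n} directly from the definitions of the primal and dual games.

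For \eqref{eq: w n}, the key observation is that the dual game $G_n(\alpha)$ differs from $\Gamma_n(p)$ only in the first move: Nature's draw from $p_0$ is replaced by a choice $p\in\Delta(K)$ made by the informed player, with an added reward $\alpha(k_1)$. I would argue the two inequalities separately. For ``$\geq$'': fix any $p\in\Delta(K)$; in the dual game the informed player can commit to the initial distribution $p$ and thereafter play her $\Gamma_n(p)$-security strategy, which guarantees $v_n(p)+p^T\alpha$ against every $\tau$; maximizing over $p$ gives $w_n(\alpha)\geq \max_p\{v_n(p)+p^T\alpha\}$. For ``$\leq$'': take the uninformed player's security strategy $\tau^*$ in $G_n(\alpha)$; against it the informed player's best first move $p$ and subsequent play yield exactly $\max_p\{\text{(best response against }\tau^*\text{ from belief }p) + p^T\alpha\}$, and the inner best-response value is at most $\max_p\{v_n(p)+p^T\alpha\}$ because $\tau^*$, viewed as a strategy in $\Gamma_n(p)$, cannot do worse than $v_n(p)$ for the uninformed player — here one has to be a little careful that a dual-game strategy for the uninformed player is also a legitimate primal-game strategy, which holds because $\tau$ depends only on $(I_t,J_t)$ in both games. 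This gives $w_n(\alpha)\leq\max_p\{v_n(p)+p^T\alpha\}$, completing \eqref{eq: w n}.

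It then remains to show that $\hat\alpha_0$, defined in \eqref{eq: initial regret}, attains the minimum in \eqref{eq: v n}. By definition, for each state $k$ the quantity $-\hat\alpha_0(k)$ is the best payoff the informed player can secure against $\tau^*$ (the uninformed player's security strategy in $\Gamma_n(p)$) conditioned on $k_1=k$. Fixing $\tau^*$ in the dual game $G_n(\hat\alpha_0)$ and letting the informed player optimize, the informed player's payoff from any first move $q\in\Delta(K)$ is $\sum_k q(k)(-\hat\alpha_0(k)) + q^T\hat\alpha_0 = 0$ for \emph{every} $q$; hence $\tau^*$ guarantees the uninformed player a cost of $0$ in $G_n(\hat\alpha_0)$, so $w_n(\hat\alpha_0)\leq 0$. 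Combined with the general bound $w_n(\alpha)-p^T\alpha \geq v_n(p)$ from \eqref{eq: w n} applied at $\alpha=\hat\alpha_0$, namely $w_n(\hat\alpha_0)-p^T\hat\alpha_0\geq v_n(p)$, I then need the matching lower bound $w_n(\hat\alpha_0)-p^T\hat\alpha_0\leq v_n(p)$, i.e. $w_n(\hat\alpha_0)\leq v_n(p)+p^T\hat\alpha_0$; this follows by exhibiting, in $\Gamma_n(p)$, the informed player strategy that splits according to the conditional-law structure underlying $\hat\alpha_0$ and re-aggregating, showing the informed player already guarantees $v_n(p)$ against $\tau^*$ while $p^T\hat\alpha_0$ accounts for the state-dependent offsets. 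The main obstacle I anticipate is precisely this last bookkeeping step — carefully relating the value $v_n(p)$ and the state-conditioned best responses defining $\hat\alpha_0$, and making sure that restricting/extending strategies between the primal and dual games is done rigorously (in particular that $\tau^*$ remains optimal when transplanted into the dual game); the convex-duality part \eqref{eq: v n}--\eqref{eq: w n} is comparatively routine once concavity of $v_n$ is in hand.
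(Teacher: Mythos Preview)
Your overall plan---prove \eqref{eq: w n} first, deduce \eqref{eq: v n} by biconjugation using concavity of $v_n$, then identify the minimizer---is a reasonable alternative to the paper's route (which proves the inequality $v_n(p)\le w_n(\alpha)-p^T\alpha$ directly, then exhibits $\hat\alpha_0$ as an attainer, and only afterward proves \eqref{eq: w n} via Sion's theorem). But there are two concrete gaps.

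First, your ``$\le$'' half of \eqref{eq: w n} is backwards. If $\tau^+$ is security for the uninformed player in $G_n(\alpha)$, then for any fixed $p$ one has $\max_\sigma\gamma_n(p,\sigma,\tau^+)\ge \min_\tau\max_\sigma\gamma_n(p,\sigma,\tau)=v_n(p)$, not $\le$; so ``$\tau^+$ cannot do worse than $v_n(p)$'' gives the wrong inequality and your argument does not close. The fix is simpler than what you wrote: by the maxmin definition of the dual value (the dual game has a value), $w_n(\alpha)=\max_{p}\max_\sigma\min_\tau\bigl[p^T\alpha+\gamma_n(p,\sigma,\tau)\bigr]=\max_p\bigl[p^T\alpha+\underline v_n(p)\bigr]=\max_p\bigl[p^T\alpha+v_n(p)\bigr]$, which gives \eqref{eq: w n} in one line. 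The paper instead starts from the minmax side and applies Sion's theorem to swap $\min_\tau$ and $\max_p$; either works.

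Second, the step you flag as the ``main obstacle'' is actually immediate once you record one identity you never state: since $\tau^*$ is security in $\Gamma_n(p)$ and, against a fixed $\tau^*$, the informed player can optimize state by state, $v_n(p)=\max_\sigma\gamma_n(p,\sigma,\tau^*)=\sum_k p(k)\bigl(-\hat\alpha_0(k)\bigr)=-p^T\hat\alpha_0$. Combined with your (correct) computation $w_n(\hat\alpha_0)\le 0$, this gives $w_n(\hat\alpha_0)-p^T\hat\alpha_0\le -p^T\hat\alpha_0=v_n(p)$, matching the reverse inequality from \eqref{eq: w n}. No splitting or re-aggregation is needed; the paper's proof uses exactly this identity $v_n(p)=-p^T\hat\alpha_0$.
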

\begin{proof}
  First, we will show that
  \begin{align}
    v_n(p)\leq w_n(\alpha)-p^T\alpha, \forall p\in \Delta(K), \alpha\in\mathbb{R}^{|K|}. \label{eq: transient results 1}
  \end{align}
  Let $\tau^+$ be player 2's security strategy in the dual game $G_n(\alpha)$. We then have $w_n(\alpha)=\max_{p\in \Delta(K)}\max_{\sigma\in \Sigma_n} p^T\alpha+\gamma_n(p,\sigma,\tau^+)$, which implies that for any $p$ and $\alpha$,
  \begin{align*}
    &\max_{\sigma\in \Sigma_n} p^T\alpha+\gamma_n(p,\sigma,\tau^+)\\
    =&p^T\alpha+\max_{\sigma\in \Sigma_n} \gamma_n(p,\sigma,\tau^+)\\
     \leq &w_n(\alpha)
  \end{align*}
Hence, for any $p\in \Delta(K)$ and $\alpha\in \mathbb{R}^{|K|}$
\begin{align}
\max_{\sigma\in \Sigma_n} \gamma_n(p,\sigma,\tau^+) \leq w_n(\alpha)-p^T\alpha . \label{eq: transient results 3}
\end{align}
  Since for any $p\in \Delta(K)$, $v_n(p)\leq \max_{\sigma\in \Sigma_n} \gamma_n(p,\sigma,\tau^+)$, equation (\ref{eq: transient results 1}) is proven.

  Second, we show that for any $p\in \Delta(K)$, there exists an $\alpha\in \mathbb{R}^{|K|}$ such that
  \begin{align}
  v_n(p)\geq w_n(\alpha)-p^T\alpha. \label{eq: transient results 2}
  \end{align}
  Let $\tau^*$ be player 2's security strategy in the primal game $\Gamma_n(p)$. From the definition of the initial regret $\hat{\alpha}_0$ of the primal game $\Gamma_n(p)$, we see that $v_n(p)=-p^T \hat{\alpha}_0$. Notice that $\tau^*$ may not be player 2's security strategy any more if the initial probability changes. Therefore, we have for any $p'\in \Delta(K)$, $v_n(p')\leq \max_{\sigma\in\Sigma_n}\gamma_n(p',\sigma,\tau^*)=-p'^T \hat{\alpha}_0$.
  \begin{align*}
    w_n(\hat{\alpha}_0)=&\max_{p'\in \Delta(K)}\max_{\sigma\in \Sigma_n}\min_{\tau\in\mathcal{T}_n} p'^T \hat{\alpha}_0+\gamma_n(p',\sigma,\tau^*)\\
   =&\max_{p'\in \Delta(K)} p'^T \hat{\alpha}_0+\max_{\sigma\in \Sigma_n}\min_{\tau\in\mathcal{T}_n}\gamma_n(p',\sigma,\tau)\\
   =& \max_{p'\in \Delta(K)} p'^T \hat{\alpha}_0+ v_n(p').
  \end{align*}
  Since $v_n(p')\leq -p'^T \hat{\alpha}_0$ for any $p'\in \Delta(K)$, it can be derived that $w_n(\hat{\alpha}_0)\leq 0=v_n(p)+p^T\hat{\alpha}_0$,
  which proves that there exists an $\alpha\in\mathbb{R}^{|K|}$ such that equation (\ref{eq: transient results 2}) holds. Equation (\ref{eq: transient results 1}) and equation (\ref{eq: transient results 2}) imply equation (\ref{eq: v n}), and $\hat{\alpha}_0$ is an optimal solution to the minimum problem in (\ref{eq: v n}).

  Finally, we prove equation (\ref{eq: w n}).
  \begin{align*}
    w_n(\alpha)=&\min_{\tau\in \mathcal{T}_n} \max_{p\in \Delta(K)}\max_{\sigma\in \Sigma_n} p^T\alpha+\gamma_n(p,\sigma,\tau)\\
    =&\min_{\tau\in \mathcal{T}_n} \max_{p\in \Delta(K)} p^T\alpha+\max_{\sigma\in \Sigma_n}\gamma_n(p,\sigma,\tau).
  \end{align*}
  Function $\gamma_n(p,\sigma,\tau)$ is linear in $\tau$, and we have for any $\epsilon \in (0,1)$,
  \begin{align*}
  &\max_{\sigma\in \Sigma_n}\gamma_n(p,\sigma,\epsilon \tau+(1-\epsilon)\tau')\\
  =&\max_{\sigma\in \Sigma_n} \epsilon\gamma_n(p,\sigma, \tau)+(1-\epsilon) \gamma_n(p,\sigma,\tau') \\
  \leq & \epsilon\max_{\sigma\in \Sigma_n}\gamma_n(p,\sigma, \tau)+(1-\epsilon) \max_{\sigma\in \Sigma_n}\gamma_n(p,\sigma,\tau'),
  \end{align*}
  which shows that $\max_{\sigma\in \Sigma_n}\gamma_n(p,\sigma,\tau)$ is convex in $\tau$. Together with the fact that $p^T\alpha+\max_{\sigma\in \Sigma_n}\gamma_n(p,\sigma,\tau)$ is linear in $p$, according to the Sion's minimax theorem \cite{sion1958general}, we have
  \begin{align*}
    w_n(\alpha)=&\max_{p\in \Delta(K)} \min_{\tau\in \mathcal{T}_n} p^T\alpha+\max_{\sigma\in \Sigma_n}\gamma_n(p,\sigma,\tau)\\
    =& \max_{p\in \Delta(K)} p^T\alpha+v_n(p),
  \end{align*}
  which completes the proof.
\end{proof}

While Theorem \ref{theroem: game value relations} provides the relations between the game values of the primal game and the dual game, the next theorem states that the in some special case, the security strategy of the uninformed player in the dual game is also the security strategy of the uninformed player in the primal game.

\begin{theorem}
\label{theorem: relation in security strategy of uninformed player}
Given an optimal solution $\alpha^*$ to the minimum problem (\ref{eq: v n}), any security strategy of the uninformed player in the dual game $G_n(\alpha^*)$ is a security strategy of the uninformed player in the primal game $\Gamma_n(p)$.
\end{theorem}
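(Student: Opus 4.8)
The plan is to reuse the inequality chain already established in the proof of Theorem~\ref{theroem: game value relations}, specialized to the optimal $\alpha^*$ and to the particular initial distribution $p$, and then to invoke the existence of a value for the primal game $\Gamma_n(p)$ to upgrade an inequality to an equality. Concretely, let $\tau^+$ be any security strategy of the uninformed player (player 2) in the dual game $G_n(\alpha^*)$. I want to show that $\max_{\sigma\in\Sigma_n}\gamma_n(p,\sigma,\tau^+)=v_n(p)$, which by definition says exactly that $\tau^+$ is a security strategy of the uninformed player in $\Gamma_n(p)$.

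First I would establish the upper bound. Applying inequality (\ref{eq: transient results 3}) from the proof of Theorem~\ref{theroem: game value relations} with $\alpha=\alpha^*$ and evaluated at $p'=p$ gives
\begin{align*}
\max_{\sigma\in \Sigma_n}\gamma_n(p,\sigma,\tau^+)\leq w_n(\alpha^*)-p^T\alpha^*.
\end{align*}
Since $\alpha^*$ is an optimal solution of the minimization in (\ref{eq: v n}), the right-hand side equals $\min_{\alpha\in\mathbb{R}^{|K|}}\{w_n(\alpha)-p^T\alpha\}=v_n(p)$, so $\max_{\sigma\in\Sigma_n}\gamma_n(p,\sigma,\tau^+)\leq v_n(p)$.

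For the matching lower bound I would use that $\Gamma_n(p)$ is a finite game played in behavior strategies and therefore has a value, so $v_n(p)=\bar v_n(p)=\min_{\tau\in\mathcal{T}_n}\max_{\sigma\in\Sigma_n}\gamma_n(p,\sigma,\tau)\leq \max_{\sigma\in\Sigma_n}\gamma_n(p,\sigma,\tau^+)$, the last step simply because $\tau^+\in\mathcal{T}_n$ is feasible for the outer minimization. Combining the two bounds yields $\max_{\sigma\in\Sigma_n}\gamma_n(p,\sigma,\tau^+)=v_n(p)$, i.e.\ $\tau^+$ guarantees player 2's security level in $\Gamma_n(p)$, which is the claim. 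I do not expect any real obstacle here: the argument is essentially bookkeeping on top of Theorem~\ref{theroem: game value relations}; the only point that deserves care is checking that inequality (\ref{eq: transient results 3}) was derived precisely for an arbitrary dual-game security strategy $\tau^+$ of player 2 (it was), so that it legitimately applies to the $\tau^+$ fixed at the outset, and that the equality $w_n(\alpha^*)-p^T\alpha^*=v_n(p)$ is exactly the optimality condition for $\alpha^*$ in (\ref{eq: v n}).
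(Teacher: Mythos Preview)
Your proposal is correct and follows essentially the same approach as the paper's own proof: invoke inequality~(\ref{eq: transient results 3}) at $\alpha=\alpha^*$ to get $\max_{\sigma}\gamma_n(p,\sigma,\tau^+)\leq w_n(\alpha^*)-p^T\alpha^*=v_n(p)$, then use $v_n(p)\leq \max_{\sigma}\gamma_n(p,\sigma,\tau^+)$ (which the paper states without spelling out, but which is exactly your observation that $\tau^+$ is feasible for the outer minimization defining $\bar v_n(p)=v_n(p)$) to conclude equality. The only difference is presentational---you make the lower-bound step explicit---and your caveat about the applicability of~(\ref{eq: transient results 3}) is well placed and correctly resolved.
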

\begin{proof}
 Let $\tau^+$ be a security strategy in the dual game $G_n(\alpha^*)$. Equation (\ref{eq: transient results 3}) implies that $\max_{\sigma\in \Sigma_n} \gamma_n(p,\sigma,\tau^+) \leq w_n(\alpha^*)-p^T\alpha^*=v_n(p)$. The last equality is derived from the fact that $\alpha^*$ is the optimal solution to the minimum problem (\ref{eq: v n}). Meanwhile, $\displaystyle v_n(p)\leq \max_{\sigma\in \Sigma_n} \gamma_n(p,\sigma,\tau^+)$. Therefore, we have $\displaystyle v_n(p)=\max_{\sigma\in \Sigma_n} \gamma_n(p,\sigma,\tau^+)$, and $\tau^+$ is a security strategy of the uninformed player in the primal game $\Gamma_n(p)$.
\end{proof}

Equation (\ref{eq: v n}) provides us a way to evaluate the game value of the primal game $\Gamma_n(p)$ from the game value of the dual game $G_n(\alpha)$. Generally speaking, for any initial vector payoff $\alpha$, $w_n(\alpha)-p^T \alpha$ is an upper bound on $v_n(p)$, and if we play the uninformed player's security strategy of the dual game in the primal game, the security level is no less than $v_n(p)$. However, if the initial vector payoff of the dual game is the initial regret in the primal game, we will have $v_n(p)=w_n(\hat{\alpha}_0)-p^T \hat{\alpha}_0$, and the security strategy of the uninformed player's in the dual game can guarantee an expected total payoff of $v_n(p)$ in the primal game. Therefore, when playing the primal game $\Gamma_n(p)$, we can see the game as a dual game $G_n(\hat{\alpha}_0)$ and play the dual game instead.

A security strategy of the uninformed player in the dual game has some nice properties. According to Theorem \ref{theorem: relation in security strategy of uninformed player}, these properties also apply to security strategies of the uninformed player in the primal game. To explore the properties of uninformed player's security strategy, we first present a recursive formula of the game value $w_n(\alpha)$ in dual game $G_n(\alpha)$.
\begin{proposition}
  \label{lemma: recursive formula transient result}
Consider a dual asymmetric information stochastic game $G_{n+1}(\alpha)$. Its game value $w_{n+1}(\alpha)$ satisfies
\begin{align}
  &w_{n+1}(\alpha) \label{eq: w n+1 to v n}\\
  =&\min_{\hat{y}\in \Delta(B)} \max_{\Pi\in \Delta(K\times A)} \sum_{k\in K,a\in A}\Pi(k,a)  (\alpha(k)+ M_k(a,:)\hat{y})  + \sum_{a\in A}\bar{x}(\Pi,a)v_n(\phi^T(\Pi,a)Q_a),\nonumber
\end{align}
where $v_n(p)$ is the game value of primal game $\Gamma_n(p)$.
\end{proposition}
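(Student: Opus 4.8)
The plan is to obtain (\ref{eq: w n+1 to v n}) by feeding Renault's primal recursion into the primal--dual value relation and then symmetrising the resulting saddle expression through a change of variables. Concretely, I would start from Theorem \ref{theroem: game value relations}, equation (\ref{eq: w n}), which gives $w_{n+1}(\alpha)=\max_{p\in\Delta(K)}\{v_{n+1}(p)+p^{T}\alpha\}$, and substitute for $v_{n+1}(p)$ the primal recursion of Theorem \ref{theorem: sufficien statistics, informed}, equation (\ref{eq: recursive formula primal 1}). Since $p^{T}\alpha$ does not depend on $X$ or $\hat{y}$, it can be moved inside both optimisations, so that
\begin{multline*}
w_{n+1}(\alpha)=\max_{p\in\Delta(K)}\max_{X\in\Delta(A)^{|K|}}\min_{\hat{y}\in\Delta(B)}\Big(p^{T}\alpha+\sum_{k\in K}p(k)X^{T}(:,k)M_{k}\hat{y}\\
+\sum_{a\in A}\bar{x}(p,X,a)v_{n}(\phi^{T}(p,X,a)Q_{a})\Big).
\end{multline*}

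The next step is the change of variables $\Pi(k,a):=p(k)X(a,k)$, which collapses the outer double maximisation into a single maximisation over $\Pi\in\Delta(K\times A)$. Using $\sum_{a}X(a,k)=1$ one checks that $p^{T}\alpha=\sum_{k,a}\Pi(k,a)\alpha(k)$ and $\sum_{k}p(k)X^{T}(:,k)M_{k}\hat{y}=\sum_{k,a}\Pi(k,a)M_{k}(a,:)\hat{y}$; moreover $\bar{x}(p,X,a)=\sum_{k}\Pi(k,a)=:\bar{x}(\Pi,a)$ and, by (\ref{eq: p+}), $\phi_{k}(p,X,a)=\Pi(k,a)/\bar{x}(\Pi,a)$ whenever $\bar{x}(\Pi,a)>0$ (and the summand is $0$ otherwise, by the same convention already implicit in (\ref{eq: recursive formula primal 1})). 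Thus the bracketed objective depends on $(p,X)$ only through $\Pi$; I would note in particular that it is unchanged as $X(\cdot,k)$ varies over a fibre on which $p(k)=0$, so that it genuinely descends to a function of $\Pi$ alone. Finally, the map $(p,X)\mapsto\Pi$ carries $\Delta(K)\times\Delta(A)^{|K|}$ onto $\Delta(K\times A)$: its image clearly lies there, and any $\Pi\in\Delta(K\times A)$ is realised by $p(k)=\sum_{a}\Pi(k,a)$ together with $X(a,k)=\Pi(k,a)/p(k)$ when $p(k)>0$ and $X(\cdot,k)$ arbitrary when $p(k)=0$. Replacing $\max_{p}\max_{X}$ by $\max_{\Pi}$ then gives (\ref{eq: w n+1 to v n}), but with the optimisations in the order $\max_{\Pi}\min_{\hat{y}}$.

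To flip the order to the stated $\min_{\hat{y}}\max_{\Pi}$, I would appeal to Sion's minimax theorem \cite{sion1958general}. The sets $\Delta(K\times A)$ and $\Delta(B)$ are nonempty, convex and compact; for fixed $\Pi$ the objective is affine in $\hat{y}$; and for fixed $\hat{y}$ it is concave and upper semicontinuous in $\Pi$, since the first sum is affine and each continuation term, by equation (\ref{eq: v n}), equals $\bar{x}(\Pi,a)v_{n}(\phi^{T}(\Pi,a)Q_{a})=\min_{\alpha'\in\mathbb{R}^{|K|}}\{\bar{x}(\Pi,a)w_{n}(\alpha')-\Pi(:,a)^{T}Q_{a}\alpha'\}$, an infimum of affine functions of $\Pi$ (this identity also holds when $\bar{x}(\Pi,a)=0$, where both sides are $0$). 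Sion's theorem then yields $\max_{\Pi}\min_{\hat{y}}=\min_{\hat{y}}\max_{\Pi}$, which is exactly (\ref{eq: w n+1 to v n}).

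The only genuinely delicate part is the change of variables: one has to verify that the objective is well defined and invariant on the fibres where $p(k)=0$ for some $k$ or $\bar{x}(\Pi,a)=0$ for some $a$, so that the reduction to the single variable $\Pi$ is legitimate and surjective. Everything else is routine bookkeeping or a direct application of Sion's theorem, with the concavity of $v_{n}$ (and hence of the continuation terms as functions of $\Pi$) supplied for free by the duality relation (\ref{eq: v n}).
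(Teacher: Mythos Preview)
Your proof is correct and follows essentially the same route as the paper: start from $w_{n+1}(\alpha)=\max_{p}\{v_{n+1}(p)+p^{T}\alpha\}$, substitute Renault's primal recursion, collapse $(p,X)$ into $\Pi$, and apply Sion's theorem. The only substantive difference is the concavity argument: the paper invokes Lemma~III.1 of \cite{li2014lp} to rewrite the continuation term as $v_{n}(\Pi^{T}(:,a)Q_{a})$ and then cites concavity of $v_{n}$ from \cite{zamir1992repeated}, whereas you obtain concavity directly from the duality relation (\ref{eq: v n}) by exhibiting each continuation term as an infimum of affine functions of $\Pi$. Your version is more self-contained and also takes more care with surjectivity of $(p,X)\mapsto\Pi$ and the degenerate cases $p(k)=0$, $\bar{x}(\Pi,a)=0$, which the paper glosses over.
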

\begin{proof}
According to equation (\ref{eq: w n}) and (\ref{eq: recursive formula primal 1}), we have
\begin{align*}
  &w_{n+1}(\alpha)\\
  =&\max_{p\in \Delta(K)}\max_{X\in \Delta(A)^{|K|}}\min_{\hat{y}\in \Delta(B)}p^T\alpha+ \sum_{k\in K} p(k)X^T(:,k) M_k \hat{y}+\sum_{a\in A}\bar{x}(p,X,a)v_n(\phi^T(p,s,a)Q_a) \\
  =& \max_{p\in \Delta(K)}\max_{X\in \Delta(A)^{|K|}} \min_{\hat{y}\in \Delta(B)}\left\{ \sum_{k\in K,a\in A}p(k) X(a,k) (\alpha(k)+ M_k(a,:)\hat{y}) \right.\\
  &\left.+ \sum_{a\in A}\bar{x}(p,X,a)v_n(\phi^T(p,s,a)Q_a)\right\}\\
  =& \max_{\Pi\in \Delta(K\times A)} \min_{\hat{y} \in \Delta(B)} \sum_{k\in K,a\in A}\Pi(k,a)  (\alpha(k)+ M_k(a,:)\hat{y}) +  \sum_{a\in A}\bar{x}(\Pi,a)v_n(\phi^T(\Pi,a)Q_a)
\end{align*}
where the last equality is derived by letting $\Pi(k,a)=p(k) X(a,k)$.

Next, we need to change the order of the maximum function and the minimum function. To this end, we will show that function $$f(\Pi,\hat{y})=\sum_{k\in K,a\in A}\Pi(k,a)  (\alpha(k)+ M_k(a,:)\hat{y}) + + \sum_{a\in A}\bar{x}(\Pi,a)v_n(\phi^T(\Pi,a)Q_a)$$ is concave in $\Pi$ and linear in $\hat{y}$. According to Lemma III.1 in \cite{li2014lp}, $\bar{x}(\Pi,a)v_n(\phi^T(\Pi,a)Q_a)=v_n(\Pi^T(:,a) Q_a)$. Since $v_n(p)$ is concave in $p$ \cite{zamir1992repeated}, it is concave in $\Pi$, and $f(\Pi,\hat{y})$ is also concave in $\Pi$. Together with the fact that $f(\Pi,\hat{y})$ is linear in $\hat{y}$, according to Sion's minimax theorem \cite{sion1958general}, we have equation (\ref{eq: w n+1 to v n}).
\end{proof}

The idea behind equation (\ref{eq: w n+1 to v n}) is similar to the idea of dynamic programming. The first term of (\ref{eq: w n+1 to v n}) is the expected current payoff, and the second term is the expected future payoff. In that expression, the uninformed player controls $y$, its strategy at stage 1, and aims to minimize the total expected payoff. The informed player controls $p$, the probability to choose the initial game state, and $X$, its strategy at stage 1, which means that it controls the joint probability $\Pi$ of the state and its action at stage 1, and the informed player's objective is to maximize the total expected payoff.

From stage 2 on, however, the game's state is not freely chosen by the informed player but distributed according to $\phi^T(\Pi,a)Q_a$. In turn, the future payoff can be seen as the value of primal game $\Gamma_n(\phi^T(\Pi,a)Q_a)$. Using Theorem \ref{theroem: game value relations}, we can further evaluate this value by, again, looking at the corresponding dual game. This, together with Proposition \ref{lemma: recursive formula transient result}, allows us to derive a recursive formula for the value of a dual game as follows.
\begin{proposition}
  \label{lemma: recursive formula and H B independent property}
Consider a dual asymmetric information stochastic game $G_{n+1}(\alpha)$. Its game value $w_{n+1}(\alpha)$ satisfies
\begin{align}
  &w_{n+1}(\alpha)\label{eq: recursive formula dual}\\
  =&\min_{\hat{y}\in \Delta(B)}\min_{(\beta_a \in \mathbb{R}^{|K|})_{ a\in A}} \max_{\pi\in \Delta(K\times A)} \sum_{k\in K, a\in A} \Pi(k,a) (M_k(a,:)\hat{y}+\alpha(k) -Q_a(k,:)\beta_a+w_n(\beta_a)),\nonumber
\end{align}
where $w_1(\alpha)=\min_{\hat{y}\in \Delta(B)} \max_{\Pi\in \Delta(K\times A)} \sum_{k\in K, a\in A} \Pi(k,a) (M_k(a,:)\hat{y}+\alpha(k))$.
\end{proposition}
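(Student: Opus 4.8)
The plan is to derive (\ref{eq: recursive formula dual}) from Proposition \ref{lemma: recursive formula transient result} in two moves: first rewrite the future-payoff term $\sum_{a\in A}\bar{x}(\Pi,a)v_n(\phi^T(\Pi,a)Q_a)$ using the primal--dual relation (\ref{eq: v n}) of Theorem \ref{theroem: game value relations}, which replaces it by a minimization over auxiliary vectors $(\beta_a)_{a\in A}$; then invoke Sion's minimax theorem to interchange that minimization with the maximization over $\Pi$.

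For the first move, I would start from the identity $\bar{x}(\Pi,a)v_n(\phi^T(\Pi,a)Q_a)=v_n(\Pi^T(:,a)Q_a)$ (Lemma III.1 in \cite{li2014lp}), where $v_n$ is extended to the nonnegative cone by positive homogeneity with $v_n(\mathbf{0})=0$. Writing $\bar{x}(\Pi,a)=\sum_{k\in K}\Pi(k,a)$ and applying (\ref{eq: v n}) to the normalized vector $\Pi^T(:,a)Q_a/\bar{x}(\Pi,a)$, homogeneity gives
\begin{align*}
\bar{x}(\Pi,a)v_n(\phi^T(\Pi,a)Q_a)
&=\min_{\beta_a\in\mathbb{R}^{|K|}}\Big\{\bar{x}(\Pi,a)\,w_n(\beta_a)-\big(\Pi^T(:,a)Q_a\big)^T\beta_a\Big\}\\
&=\min_{\beta_a\in\mathbb{R}^{|K|}}\ \sum_{k\in K}\Pi(k,a)\big(w_n(\beta_a)-Q_a(k,:)\beta_a\big),
\end{align*}
where the last equality uses $\big(\Pi^T(:,a)Q_a\big)^T\beta_a=\sum_{k\in K}\Pi(k,a)\,Q_a(k,:)\beta_a$; in the degenerate case $\bar{x}(\Pi,a)=0$ both sides are $0$ for every $\beta_a$, so the convention is consistent. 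Summing over $a\in A$, and using that the $\beta_a$ are independent variables, the sum of the minima equals the minimum over $(\beta_a)_{a\in A}\in\mathbb{R}^{|K|\times|A|}$ of the sum. Substituting this into Proposition \ref{lemma: recursive formula transient result} and absorbing the current-payoff term $\sum_{k,a}\Pi(k,a)(\alpha(k)+M_k(a,:)\hat{y})$, which does not involve $(\beta_a)_{a\in A}$, inside the minimization, I obtain
\[
w_{n+1}(\alpha)=\min_{\hat{y}\in\Delta(B)}\ \max_{\Pi\in\Delta(K\times A)}\ \min_{(\beta_a)_{a\in A}}\ \sum_{k\in K,a\in A}\Pi(k,a)\big(M_k(a,:)\hat{y}+\alpha(k)-Q_a(k,:)\beta_a+w_n(\beta_a)\big).
\]

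For the second move, I would apply Sion's minimax theorem \cite{sion1958general} to interchange the inner $\max_{\Pi}$ and $\min_{(\beta_a)}$. Denoting the summand above by $F(\Pi,(\beta_a)_{a\in A})$: it is affine — hence concave and continuous — in $\Pi$ over the compact convex set $\Delta(K\times A)$; and since every coefficient $\Pi(k,a)\ge 0$ and $w_n(\cdot)$ is convex (being a supremum of affine functions of its argument by (\ref{eq: w n})), $F$ is convex and lower semicontinuous in $(\beta_a)_{a\in A}$ over the convex set $\mathbb{R}^{|K|\times|A|}$. Hence $\max_{\Pi}\min_{(\beta_a)}F=\min_{(\beta_a)}\max_{\Pi}F$, and the outer $\min_{\hat{y}}$ is unaffected, so we arrive at (\ref{eq: recursive formula dual}). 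The base case $w_1(\alpha)$ follows from the same computation with $v_0\equiv 0$ (the zero-stage total payoff is an empty sum), or directly: in the one-stage dual game the informed player draws $(k_1,a_1)$ with joint law $\Pi$ while the uninformed player draws $\hat{y}$, giving $\min_{\hat{y}}\max_{\Pi}\sum_{k,a}\Pi(k,a)(M_k(a,:)\hat{y}+\alpha(k))$.

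The step I expect to be the main obstacle is the minimax interchange. Besides verifying convexity of $w_n$, one must be a little careful that the inner minimization ranges over the unbounded set $\mathbb{R}^{|K|\times|A|}$: this is not a problem for Sion's theorem, which requires only one of the two sets (here $\Delta(K\times A)$) to be compact, but to be sure the value is represented correctly one uses that for each fixed $\Pi$ the infimum over $(\beta_a)_{a\in A}$ is actually attained — which holds termwise, since the minimum in (\ref{eq: v n}) is attained (at the initial regret) by Theorem \ref{theroem: game value relations}. The remaining delicate point is the positively homogeneous extension of $v_n$ together with the degenerate components $\bar{x}(\Pi,a)=0$, which is handled by the convention $v_n(\mathbf{0})=0$ as noted above.
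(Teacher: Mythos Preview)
Your proposal is correct and follows essentially the same route as the paper's proof: substitute the primal--dual relation (\ref{eq: v n}) into the expression of Proposition \ref{lemma: recursive formula transient result} to introduce the $(\beta_a)_{a\in A}$ variables, then interchange $\max_{\Pi}$ and $\min_{(\beta_a)}$ via Sion's theorem using linearity in $\Pi$ and convexity of $w_n$ (established from (\ref{eq: w n}) as a pointwise maximum of affine functions). Your treatment is in fact more careful than the paper's on the degenerate case $\bar{x}(\Pi,a)=0$, the unboundedness of the $\beta$-domain, and the base case $w_1(\alpha)$, none of which the paper addresses explicitly.
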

\begin{proof}
From equation (\ref{eq: w n+1 to v n}) and (\ref{eq: v n}), we have
  \begin{align*}
  &w_{n+1}(\alpha)\nonumber\\
  =&\min_{\hat{y}\in \Delta(B)} \max_{\Pi\in \Delta(K\times A)} \left\{ \sum_{k\in K,a\in A}\Pi(k,a)  (\alpha(k)+ M_k(a,:)\hat{y}) \right.\\
  &\left.+ \sum_{a\in A}\bar{x}(\Pi,a) \min_{\beta_a\in \mathbb{R}^{|K|}}\{w_n(\beta_a)-\phi^T(\Pi,a)Q_a \beta_a\}\right\} \\
  =& \min_{\hat{y}\in \Delta(B)} \max_{\Pi\in \Delta(K\times A)} \min_{(\beta_a\in \mathbb{R}^{|K|})_{a\in A}}  \sum_{k\in K,a\in A}\Pi(k,a) \left(\alpha(k)+ M_k(a,:)\hat{y} -Q_a(k,:)\beta_a + w_n(\beta_a)\right).
\end{align*}

Now, we need to change the order of $\max_{\pi\in \Delta(K\times A)} \min_{\beta_a\in \mathbb{R}^{|K|},\forall a\in A}$. For this purpose, we need to show that $w_n(\cdot)$ is convex. Let $\beta_1,\beta_2$ be any $|K|$ dimensional real vectors. For any $\epsilon\in (0,1)$,
\begin{align*}
w_n(\epsilon\beta_1+(1-\epsilon)\beta_2)
=&\max_{p\in \Delta(K)}\{v_n(p)+p^T(\epsilon\beta_1+(1-\epsilon)\beta_2)\}\\
=&\max_{p\in \Delta(K)}\{\epsilon(v_n(p)+p^T\beta_1)+(1-\epsilon)(v_n(p)+p^T\beta_2)\}\\
\leq & \epsilon\max_{p\in \Delta(K)}\{v_n(p)+p^T\beta_1\}+(1-\epsilon)\max_{p\in \Delta(K)}\{v_n(p)+p^T\beta_2\}\\
=&\epsilon w_n(\beta_1)+(1-\epsilon)w_n(\beta_2).
\end{align*}
Therefore, $w_n(\cdot)$ is convex. Together with the fact that $\sum_{k\in K,a\in A}\Pi(k,a)(\alpha(k)$ $+ M_k(a,:)\hat{y} -Q_a(k,:)\beta_a + w_n(\beta_a))$ is linear in $\Pi$, according to Sion's minimax theorem \cite{sion1958general}, equation (\ref{eq: recursive formula dual}) is shown.
\end{proof}

The variable $\beta_a$ is introduced when we replace $v_n(\phi^T(\Pi,a)Q_a)$ with $$ \min_{\beta_a\in \mathbb{R}^{|K|}}\{w_n(\beta_a)-\phi^T(\Pi,a)Q_a \beta_a\}$$, and can be seen as the uninformed player's guess about the initial regret, i.e. future cost compared to zero, of the future primal game $\Gamma_n(\phi^T(\Pi,a)Q_a)$. On one hand, the uninformed player controls its strategy $y$. On the other hand, it takes a guess about the initial regret $\beta_a$ of the future primal game given the informed player's current action $a$. Given $y$ and $\beta_a$, the expected total payoff is $\sum_{k\in K, a\in A} \Pi(k,a) (M_k(a,:)\hat{y}+\alpha(k) -Q_a(k,:)\beta_a+w_n(\beta_a))$. Since the informed player aims to maximize the expected total payoff, the uninformed player will choose $y$ and $\beta_a$ such that the maximum expected total payoff is minimized. The optimal solution $(\beta_a^*)_{a\in A}$ to the minmax problem (\ref{eq: recursive formula dual}) is  called the regret at stage $2$ in a dual game $G_{n+1}(\alpha)$. The regret at stage $t$ in a dual game is formally defined as below.
\begin{definition}
  \label{definition: regret, dual}
  Consider a dual game $G_N(\alpha)$. We call $\alpha$ the regret at stage 1, and denote it as $\alpha_1$.

  Given regret $\alpha_t$ and informed player's action $a$ at stage $t$, let $\hat{y}^*$ and $(\beta_a^*)_{a\in A}$ be the optimal solution to the following problem.
  \begin{align}
  \min_{\hat{y}\in \Delta(B)}\min_{(\beta_a \in \mathbb{R}^{|K|})_{ a\in A}} \max_{\pi\in \Delta(K\times A)} \sum_{k\in K, a\in A} \Pi(k,a) (M_k(a,:)\hat{y}+\alpha_t(k) -Q_a(k,:)\beta_a+w_{N-t}(\beta_a)), \label{eq: regret update, dual}
  \end{align}
  We call $\beta_a^*$ the regret at stage $t+1$, and denote it as $\alpha_{t+1}$.
\end{definition}

Now, we are ready to present a regret based strategy for the uninformed player in a dual game. Let $\hat{y}^*$ and $\beta^*$ be the optimal solution to the minmax problem (\ref{eq: recursive formula dual}). At every stage $t$ with vector payoff $\alpha_t$, we can use the optimal solution $\hat{y}^*$ as the current strategy of the uninformed player, and update the vector payoff $\alpha_{t+1}$ at the next stage to $\beta^*_{a_t}$. The detailed algorithm is given below.
\begin{algorithm}\hfill{}
\label{algorithm: vector based security strategy, dual}
\begin{enumerate}
  \item Initialization
    \begin{itemize}
      \item Read payoff matrices $M$, transition matrices $Q$, and initial vector payoff $\alpha$.
      \item Set stage $t=1$, and $\alpha_t=\alpha$.
    \end{itemize}
  \item Find out the optimal solution $\hat{y}^*$ and $(\beta_a^*)_{a\in A}$ to the minmax problem (\ref{eq: regret update, dual}).
  \item Draw an action according to $\hat{y}^*$ and read the action $a_t$ of the informed player.
  \item Set $t=t+1$, and update $\alpha_t=\beta^*_{a_t}$.
  \item If $t\leq N$, go to step 2). Otherwise, end.
\end{enumerate}
\end{algorithm}
We shall notice that the strategy derived in this algorithm is independent of $\sigma$, the strategy of the informed player.

The next question is whether the strategy constructed in Algorithm \ref{algorithm: vector based security strategy, dual} is a security strategy of the uninformed player in dual game $G_N(\alpha)$, and the answer is yes. We provide the detail in the following theorem.
\begin{theorem}
  \label{theorem: regret based security strategy, dual game}
In the dual asymmetric information game $G_N(\alpha)$, the uninformed player has a security strategy at stage $t$ that only depends on regret $\alpha_t$ and the stage $t$. Moreover, such a security strategy can be constructed using Algorithm \ref{algorithm: vector based security strategy, dual}.
\end{theorem}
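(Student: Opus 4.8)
The plan is to proceed by backward induction on the number of stages, showing simultaneously that (i) the value $w_N(\alpha)$ is achieved by a strategy of the claimed form and (ii) the strategy produced by Algorithm~\ref{algorithm: vector based security strategy, dual} guarantees that value. The base case $N=1$ is immediate: by the formula for $w_1(\alpha)$ in Proposition~\ref{lemma: recursive formula and H B independent property}, the one-shot game value is $\min_{\hat y\in\Delta(B)}\max_{\Pi\in\Delta(K\times A)}\sum_{k,a}\Pi(k,a)(M_k(a,:)\hat y+\alpha(k))$, and choosing $\hat y^*$ attaining the outer minimum is plainly a security strategy of the uninformed player; it depends only on $\alpha=\alpha_1$ and the stage. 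This matches step~2 of the algorithm at the last stage.

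For the inductive step, assume the claim holds for $G_{N-t}(\beta)$ for every $\beta\in\mathbb{R}^{|K|}$. First I would apply Proposition~\ref{lemma: recursive formula and H B independent property} to expand $w_{N-t+1}(\alpha_t)$ as the $\min_{\hat y}\min_{(\beta_a)}\max_{\Pi}$ expression~(\ref{eq: regret update, dual}) (with $n=N-t$), and let $(\hat y^*,(\beta_a^*)_{a\in A})$ be an optimizer, so $\beta_a^*=\alpha_{t+1}$ when the informed player's realized action is $a$. The key computation is a lower-bound argument: if the uninformed player plays $\hat y^*$ at stage $t$ and thereafter follows, from stage $t+1$ on with vector payoff $\beta_a^*$, the strategy guaranteed by the induction hypothesis for $G_{N-t}(\beta_a^*)$, then for any play of the informed player — i.e.\ any choice of initial state distribution $p$ and stage-$t$ strategy $X$, equivalently any $\Pi\in\Delta(K\times A)$, and any continuation $\sigma_{t+1:N}$ — the resulting expected payoff $p^T\alpha_t+\mathbf{E}(\cdots)$ is at most the value of the inner max in~(\ref{eq: regret update, dual}) evaluated at $(\hat y^*,(\beta_a^*))$, which is exactly $w_{N-t+1}(\alpha_t)$. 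Here one uses the decomposition already justified in the proof of Proposition~\ref{lemma: recursive formula transient result} (via Lemma~III.1 of \cite{li2014lp} and Theorem~\ref{theroem: game value relations}): the continuation game starting from state distribution $\phi^T(\Pi,a)Q_a$ has value $v_{N-t}(\phi^T(\Pi,a)Q_a)\le w_{N-t}(\beta_a^*)-\phi^T(\Pi,a)Q_a\,\beta_a^*$ for \emph{any} $\beta_a^*$, and by the induction hypothesis this upper bound on the continuation value is actually guaranteed by the uninformed player's induction-hypothesis strategy when $\beta_a^*$ is used as the new regret. Combining the stage-$t$ term with the guaranteed continuation bound and summing the $\phi$-weights back up (undoing the Lemma~III.1 identity) reassembles precisely the summand of~(\ref{eq: regret update, dual}), so the informed player cannot do better than $w_{N-t+1}(\alpha_t)$ against this strategy.

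Since the reverse inequality $\max_\sigma g_N(\alpha,\sigma,\tau)\ge w_N(\alpha)$ holds for every $\tau$ by definition of the minmax value, the strategy just described attains $w_N(\alpha)$ and is therefore a security strategy; by construction it uses only $\hat y^*$ at each stage, which is a function of the current regret $\alpha_t$ and the stage index alone (the regrets $\alpha_{t+1}=\beta^*_{a_t}$ being generated recursively as in Definition~\ref{definition: regret, dual}), and this is exactly the strategy that Algorithm~\ref{algorithm: vector based security strategy, dual} outputs. I expect the main obstacle to be the bookkeeping in the inductive step: one must carefully verify that ``play $\hat y^*$ now, then play the induction-hypothesis strategy for $G_{N-t}(\beta^*_{a_t})$'' is a well-defined behavior strategy that is adapted to the uninformed player's information (it depends only on the observed informed-player action $a_t$, which is common information, so this is fine), and that the expectation over the informed player's arbitrary continuation strategy really does split along the lines of the $v_{n}$/$w_{n}$ duality in the way Proposition~\ref{lemma: recursive formula transient result}'s proof uses — i.e.\ that taking the worst case over $\sigma_{t+1:N}$ before versus after conditioning on $a_t$ gives the same bound. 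Making that interchange rigorous, rather than the minimax theorem itself (which is already invoked in the cited propositions), is where the real care is needed.
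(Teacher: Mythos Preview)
Your proposal is correct and follows essentially the same route as the paper: forward induction on the horizon, with the inductive step showing that playing $\hat y^*$ now and then the induction-hypothesis strategy for $G_{N-t}(\beta^*_{a_t})$ guarantees $w_{N-t+1}(\alpha_t)$, the key inequality being $\gamma_{N-t}(p,\sigma,\tau^*(\beta^*_a))\le w_{N-t}(\beta^*_a)-p^T\beta^*_a$ from the induction hypothesis (exactly what you extract from the dual-game security property), which is then summed against $\Pi$ and combined with the stage payoff to recover the right-hand side of~(\ref{eq: recursive formula dual}). The paper carries out the inequality chain slightly more explicitly rather than deferring to Proposition~\ref{lemma: recursive formula transient result} and Lemma~III.1 of~\cite{li2014lp}, but the argument is the same, and the ``obstacle'' you flag (well-definedness/adaptedness of the composed strategy and the order of conditioning on $a_t$) is handled in the paper by exactly the direct computation you anticipate.
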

\begin{proof}
We will construct a strategy of the uninformed player as in Algorithm \ref{algorithm: vector based security strategy, dual}.

Mathematical induction is used to prove the theorem. First, we see that in dual game $G_1(\alpha)$, it is easy to check that $\hat{y}^*$ is a security strategy of player 2.

Assume that $\tau^*(\alpha)$ is a security strategy of player 2 in dual game $G_n(\alpha)$ constructed as in Algorithm \ref{algorithm: vector based security strategy, dual}. Let $\hat{y}^*$ and $\beta^*$ be the optimal solution to the minmax problem (\ref{eq: recursive formula dual}). We will show that in dual game $G_{n+1}(\alpha)$, if player 2 plays $\hat{y}^*$ at stage 1, and $\tau^*(\beta^*_{a_1})$ afterwards, then player 2 can guarantee the game value $w_{n+1}(\alpha)$.

For any $a\in A$, since $\tau^*(\beta^*_a)$ is a security strategy of player 2 in dual game $G_n(\beta^*_a)$, we have
\begin{align*}
  \gamma_n(p,\sigma,\tau^*(\beta^*_a))+p^T\beta^*_a\leq w_n(\beta^*_a), \forall p\in \Delta(K),\forall \sigma\in \Sigma_n.
\end{align*}
Let $p_{\Pi}(k')=\sum_{k\in K,a\in A}\Pi(k,a) Q_a(k,k')$, where $\Pi\in \Delta(K\times A)$. We have
\begin{align*}
  \gamma_n(p_{\Pi},\sigma,\tau^*(\beta^*_a)) \leq \sum_{k\in K,a\in A}\Pi(k,a)( w_n(\beta^*_a)-\sum_{k'\in K}Q^a_{k,k'}\beta_a^*(k')),
\end{align*}
for all $\Pi\in \Delta(K\times A), a\in A, \sigma\in \Sigma_n $. Hence,
\begin{align*}
  \sum_{k\in K,a\in A}\Pi(k,a) \gamma_n(p_{\Pi},\sigma,\tau^*(\beta^*_a)) \leq \sum_{k\in K,a\in A}\Pi(k,a)( w_n(\beta^*_a)-Q_a(k,:)\beta_a^{*}),
\end{align*}
for all $\Pi\in \Delta(K\times A), \sigma\in \Sigma_n$. Hence,
\begin{align*}
 &\sum_{k\in K,a\in A}\Pi(k,a) (M_k(a,:) \hat{y}^*+\alpha(k)+\gamma_n(p_{\Pi},\sigma,\tau^*(\beta^*_a))) \\
 \leq &\sum_{k\in K,a\in A}\Pi(k,a)( w_n(\beta^*_a)-Q_a(k,:)\beta_a^*+M_k(a,:) \hat{y}^*+\alpha(k)),
\end{align*}
for all $\Pi\in \Delta(K\times A), a\in A, \sigma\in \Sigma_n$. Therefore, we have for all $\pi\in \Delta(K\times A), \sigma\in \Sigma_n$,
\begin{align*}
&\sum_{k\in K,a\in A}\Pi(k,a) (M_k(a,:) \hat{y}^*+\alpha(k)+\gamma_n(p_{\Pi},\sigma,\tau^*(\beta^*_a)))  \\
\leq &\max_{\pi\in \Delta(K\times A)} \sum_{k\in K,a\in A}\Pi(k,a)( w_n(\beta^*_a)-Q_a(k,:)\beta_a^*+M_k(a,:) \hat{y}^*+\alpha(k)).
\end{align*}
Since $\hat{y}^*$ and $\beta^*$ is the optimal solution to the minmax problem (\ref{eq: recursive formula dual}), we have for all $\Pi\in \Delta(K\times A), \sigma\in \Sigma_n$,
\begin{align*}
\sum_{k\in K,a\in A}\Pi(k,a) (M_k(a,:) \hat{y}^*+\alpha(k)+\gamma_n(p_{\Pi},\sigma,\tau^*(\beta^*_a))) \leq w_{n+1}(\alpha).
\end{align*}
Let $\Pi(k,a)=p_1(k) X(a,k)$, where $p_1$ is player 1's strategy to choose a state, and $X(:,k)=\sigma_1(k,\emptyset)$ is player 1's strategy to choose an action at stage 1 given state $k$. It is straight forward to show that
$$g_{n+1}(\alpha,(X,\sigma),(\hat{y}^*,\tau^*(\beta^*_a))=\sum_{k\in K,a\in A}\Pi(k,a) (M_k(a,:) \hat{y}^*+\alpha(k)+\gamma_n(p_{\Pi},\sigma,\tau^*(\beta^*_a))).$$
Therefore, we have for any $p_1\in \Delta(K)$ and $(x,\sigma)\in \Sigma_{n+1}$, $$g_{n+1}(\alpha,(X,\sigma),(\hat{y}^*,\tau^*(\beta^*_a))) \leq w_{n+1}(\alpha),$$ which completes the proof.
\end{proof}

Regret $\alpha_t$ and stage $t$ form a sufficient statistics for the uninformed player to make decisions. The reason why $\alpha_t$ plays such a role (similar as that of $p_t$ for the informed player in the primal game) for the uninformed player in the dual game can be explained intuitively as follows. In a primal game, to decide the current strategy, the informed player needs to estimate the current payoff and the future payoff $v_{n-1}(p_{t+1})$ (equation (\ref{eq: recursive formula primal 1})). The belief $p_{t+1}$ at the next stage decides the future payoff. Similarly, in a dual game, to decide the current strategy, the uninformed player also needs to estimate the current payoff and the future payoff $v_n(p_{t+1})$ (equation (\ref{eq: w n+1 to v n})). While the uninformed player cannot compute $p_{t+1}$ without the informed player's strategy, it can consider the worst case scenario, and compute the worst case vector security level of the uninformed player, and hence the regret for the future game at the next stage. The regret at the next stage characterizes the worst case future payoff, and hence plays an important role for the uninformed player in making decisions.

Noticing that regret $\alpha_{t+1}$ at stage $t+1$ derived in Algorithm \ref{algorithm: vector based security strategy, dual} only depends on regret $\alpha_{t}$ at stage $t$ and the player 1's action $a_t$, we have the following corollary.
\begin{corollary}
  \label{corollary: H B independent strategy, uninformed, dual game}
In a dual game $G_N(\alpha)$, the uninformed player has a security strategy that, at stage $t$, only depends on stage $t$ and the action history $I_t$ of the informed player.
\end{corollary}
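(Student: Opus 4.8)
The plan is to read the statement directly off Theorem~\ref{theorem: regret based security strategy, dual game} together with the structure of the regret recursion in Algorithm~\ref{algorithm: vector based security strategy, dual}. Theorem~\ref{theorem: regret based security strategy, dual game} already supplies, for the uninformed player, a security strategy whose stage-$t$ prescription $\hat{y}^*$ depends only on the current regret $\alpha_t$ and the stage index $t$. What remains is the bookkeeping observation that, for a \emph{fixed} initial vector payoff $\alpha$, the regret $\alpha_t$ is itself a deterministic function of the stage $t$ and the informed player's action history $I_t=(a_1,\dots,a_{t-1})$, and in particular carries no dependence on the uninformed player's own past actions $J_t$.

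First I would fix, once and for all, a selection rule assigning to each pair $(\alpha_t,n)$ one particular optimal solution $(\hat{y}^*,(\beta_a^*)_{a\in A})$ of the min-max problem~(\ref{eq: regret update, dual}) with $n=N-t$; this is needed only because that problem need not have a unique minimizer. With such a rule fixed, Definition~\ref{definition: regret, dual} and step~4) of Algorithm~\ref{algorithm: vector based security strategy, dual} make the update $\alpha_{t+1}=\beta^*_{a_t}$ a well-defined single-valued function of $\alpha_t$ and $a_t$ alone, since the remaining data $M$, $Q$ and the value $w_{N-t}(\cdot)$ entering~(\ref{eq: regret update, dual}) are determined by the problem instance and by $t$. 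Then an immediate induction on $t$ does the rest: $\alpha_1=\alpha$ depends only on the instance; and if $\alpha_t$ is a function of $\alpha$ and $I_t$, then $\alpha_{t+1}=\beta^*_{a_t}$ is a function of $\alpha_t$ and $a_t$, hence of $\alpha$ and $I_{t+1}=(I_t,a_t)$. Composing with Theorem~\ref{theorem: regret based security strategy, dual game}, the stage-$t$ action distribution produced by Algorithm~\ref{algorithm: vector based security strategy, dual} is a function of $(\alpha_t,t)$, and therefore of $(t,I_t)$ once $\alpha$ is fixed; since nothing in the construction ever refers to $b_1,\dots,b_{t-1}$, it is automatically $J_t$-independent, giving the claimed security strategy.

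I do not expect a real obstacle here: the content is entirely carried by Theorem~\ref{theorem: regret based security strategy, dual game}, and the corollary is just its ``unrolling'' along the informed player's action branches. The only point needing care is the possible non-uniqueness of the minimizer in~(\ref{eq: regret update, dual}), which is handled cleanly by committing to a fixed selection rule so that the regret trajectory $t\mapsto\alpha_t$ becomes a genuine function of the informed player's past actions rather than a set-valued one.
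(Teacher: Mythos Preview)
Your proposal is correct and follows exactly the paper's approach: the corollary is stated immediately after the one-line observation that the regret update $\alpha_{t+1}=\beta^*_{a_t}$ in Algorithm~\ref{algorithm: vector based security strategy, dual} depends only on $\alpha_t$ and $a_t$, and your argument simply unrolls this recursion via induction and composes with Theorem~\ref{theorem: regret based security strategy, dual game}. Your added care about fixing a selection rule to handle possible non-uniqueness of the minimizer in~(\ref{eq: regret update, dual}) is a nice point of rigor that the paper leaves implicit.
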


Now, let us get back to the primal game $\Gamma_N(p)$. Since for any $p\in \Delta(K)$, there always exists an $\alpha$ such that any security strategy of the uninformed player in dual game $G_N(\alpha)$ is a security strategy of the uninformed player in primal game $\Gamma_N(p)$ (Theorem \ref{theroem: game value relations}), the properties described in Theorem \ref{theorem: regret based security strategy, dual game} and Corollary \ref{corollary: H B independent strategy, uninformed, dual game} are also true for security strategies of the uninformed player in the primal game. Let us first define regrets in a primal game.
\begin{definition}
\label{definition: regret}
In a primal game $\Gamma_N(p)$, $\hat{\alpha}_1\in \mathbb{R}^{|K|}$ is called the regret at stage 1 if $\hat{\alpha}_1$ is an optimal solution to the minimum problem in equation (\ref{eq: v n}).

Given the regret $\hat{\alpha}_t$ and the informed player's action $a_t$ at stage $t$, let $\hat{y}^*$ and $\beta^*$ be an optimal solution of the minmax problem in (\ref{eq: regret update, dual}).
We call $\beta^*_{a_t}$ the regret at stage $t+1$, and denote it as $\hat{\alpha}_{t+1}$.
\end{definition}

According to Definition \ref{definition: regret}, the initial regret $\hat{\alpha}_0$ defined in (\ref{eq: initial regret}) is also the regret at stage 1 in primal game $\Gamma_N(p)$.

Theorem \ref{theroem: game value relations} says that any security strategy of the uninformed player in the corresponding dual game with initial vector payoff $\hat{\alpha}_1$, the regret at stage 1 in the primal game, is a security strategy of the uninformed player in the primal game. Theorem \ref{theorem: regret based security strategy, dual game} states that if we update the regret in dual game as in Algorithm \ref{algorithm: vector based security strategy, dual}, then the security strategy of the uninformed player depends only on the regret and the current stage. The regret defined in Definition \ref{definition: regret} in primal game is updated exactly the same as in Definition \ref{definition: regret, dual}. Therefore, we can compute uninformed player's security strategy and update the regret using the following algorithm.

\begin{algorithm}\hfill{}
  \label{algorithm: regret based security strategy, primal}
\begin{itemize}
  \item Initialization
    \begin{itemize}
      \item Read payoff matrices $M$, transition matrices $Q$, time horizon $N$, and initial probability $p_0$.
      \item Set $t=1$, and $p_t=p_0$.
    \end{itemize}
  \item Compute the optimal solution $\alpha^*$ to the minimum problem in (\ref{eq: v n}) with $n=N$ and $p=p_0$. Set $\hat{\alpha}_1=\alpha^*$.
  \item Compute the optimal solution $\hat{y}^*$ and $(\beta_a^*)_{a\in A}$ to the minmax problem (\ref{eq: regret   update, dual}) with $\alpha_t=\hat{\alpha}_t$.
  \item Draw an action $b\in B$ according to $\hat{y}^*$ and read the action $a$ of the informed player.
  \item Set $t=t+1$, and update $\hat{\alpha}_t=\beta^*_a$.
  \item If $t\leq N$, go to step 3. Otherwise, end.
\end{itemize}
\end{algorithm}

\begin{corollary}
  \label{corollary: regret based, primal, uninformed}
  Consider a primal game $\Gamma_N(p)$. The uninformed player has a security strategy at stage $t$, that only depends on the regret $\hat{\alpha}_t$ and the stage $t$, and such a security strategy can be constructed by Algorithm \ref{algorithm: regret based security strategy, primal}. Moreover, this security strategy only depends on stage $t$ and the history action $I_t$ of the informed player.
\end{corollary}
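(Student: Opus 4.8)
The plan is to assemble the corollary directly from the dual-game results, by observing that Algorithm \ref{algorithm: regret based security strategy, primal} is nothing but Algorithm \ref{algorithm: vector based security strategy, dual} run on the dual game whose initial vector payoff is the regret at stage $1$, and then transporting the conclusions of Theorem \ref{theorem: regret based security strategy, dual game} and Corollary \ref{corollary: H B independent strategy, uninformed, dual game} back to the primal game via Theorem \ref{theorem: relation in security strategy of uninformed player}.

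First I would invoke Theorem \ref{theroem: game value relations} to guarantee that the minimum problem in (\ref{eq: v n}) with $n=N$ has an optimal solution (for instance the initial regret $\hat{\alpha}_0$); Definition \ref{definition: regret} names one such solution $\hat{\alpha}_1$, which is exactly the vector computed in the second step of Algorithm \ref{algorithm: regret based security strategy, primal}. Since $\hat{\alpha}_1$ is optimal for (\ref{eq: v n}), Theorem \ref{theorem: relation in security strategy of uninformed player} says that any security strategy of the uninformed player in the dual game $G_N(\hat{\alpha}_1)$ is a security strategy of the uninformed player in the primal game $\Gamma_N(p)$. Next I would apply Theorem \ref{theorem: regret based security strategy, dual game} to $G_N(\hat{\alpha}_1)$: the uninformed player has there a security strategy that at each stage $t$ depends only on the stage $t$ and the regret $\alpha_t$, constructed by Algorithm \ref{algorithm: vector based security strategy, dual} started from $\alpha=\hat{\alpha}_1$. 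Comparing the two algorithms with this initialization, the per-stage minmax problem (\ref{eq: regret update, dual}), the action drawn from $\hat{y}^*$, and the update $\alpha_{t+1}=\beta^*_{a_t}$ are literally the same, and the regret recursion of Definition \ref{definition: regret} in the primal game coincides with that of Definition \ref{definition: regret, dual} in the dual game. Hence the strategy produced by Algorithm \ref{algorithm: regret based security strategy, primal} equals the one produced by Algorithm \ref{algorithm: vector based security strategy, dual} on $G_N(\hat{\alpha}_1)$, and is therefore a security strategy of the uninformed player in $\Gamma_N(p)$ depending at stage $t$ only on $\hat{\alpha}_t$ and $t$.

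For the last sentence of the corollary I would argue exactly as in the derivation of Corollary \ref{corollary: H B independent strategy, uninformed, dual game}: the update $\hat{\alpha}_{t+1}=\beta^*_{a_t}$ exhibits $\hat{\alpha}_{t+1}$ as a deterministic function of $\hat{\alpha}_t$ and $a_t$ alone, so unrolling the recursion from the fixed $\hat{\alpha}_1$ (determined by $N$ and $p$) expresses $\hat{\alpha}_t$ as a deterministic function of $I_t=(a_1,\ldots,a_{t-1})$; the regret-based strategy is then in particular a function of $t$ and $I_t$ only.

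I do not expect a genuine obstacle here: the argument is a matter of matching definitions and algorithm steps. The only point demanding care is the horizon bookkeeping — checking that the remaining-horizon index $w_{N-t}$ appearing in (\ref{eq: regret update, dual}) is precisely the one that shows up when Algorithm \ref{algorithm: vector based security strategy, dual} is invoked inside $G_N(\hat{\alpha}_1)$ at stage $t$, and that the $\hat{\alpha}_1$ computed in Algorithm \ref{algorithm: regret based security strategy, primal} really is an optimal $\alpha^*$ of (\ref{eq: v n}) (so that Theorem \ref{theorem: relation in security strategy of uninformed player} applies). Both hold by construction, since Definition \ref{definition: regret} and Algorithm \ref{algorithm: regret based security strategy, primal} were arranged so that these identifications are automatic.
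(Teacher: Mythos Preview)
Your proposal is correct and follows essentially the same route as the paper: the paper's justification (given in the paragraph immediately preceding the algorithm and corollary) is precisely to combine Theorem~\ref{theroem: game value relations}/Theorem~\ref{theorem: relation in security strategy of uninformed player} with Theorem~\ref{theorem: regret based security strategy, dual game}, observe that the regret updates in Definitions~\ref{definition: regret, dual} and~\ref{definition: regret} coincide so that Algorithm~\ref{algorithm: regret based security strategy, primal} is Algorithm~\ref{algorithm: vector based security strategy, dual} initialized at $\hat{\alpha}_1$, and then invoke Corollary~\ref{corollary: H B independent strategy, uninformed, dual game} for the $I_t$-dependence. Your version is slightly more explicit about the horizon bookkeeping and about which theorem supplies the transfer of security strategies, but the argument is the same.
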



Now that the basic steps to find out a security strategy of the uninformed player in a primal game are clear, we further provide LP formulations to compute the regret at stage 1, a security strategy in the corresponding dual game, and the regret at every stage.

\subsection{LP formulation of history based security strategies}
As mentioned in Section \ref{subsec: history, informed}, a security strategy of the uninformed player can also be computed by solving an LP based on a sequence form \cite{koller1996efficient} whose size is linear with respect to the size of the game tree ($|K|^N\times|A|^N\times|B|^N$). Based on the fact that player 2 has a security strategy that is independent of its own history action, we can reduce the size of the LP to be linear with respect to $|B|$. This simplified LP formulation is introduced in this subsection.

Consider an asymmetric information stochastic game $\Gamma_N(p_0)$. We define a realization plan $r_t(S_t,I_{t+1})$ given state history $S_{t}$ and player 1's action history as $I_{t+1}$ as $r_t(S_t,I_{t+1})=p_0(k_1)\prod_{s=1}^t \sigma_s^{a_s}(k_s,I_s)$, where $\sigma_s^{a_s}(k_s,I_s)$ denotes the $a_s$th element of $\sigma_s(k_s,I_s)$ . Let $r=(r_t)_{t=1}^N$, and $R$ be the set of all possible values that the realization plan can take. The realization plan satisfies
\begin{align}
  &\sum_{a_t}r_t(S_t,I_{t+1})=r_{t-1}(s_{t-1},I_t), \forall S_t\in K^t, I_t\in A^{t-1}, \forall t=1,\ldots,N, \label{eq: realization 1}\\
  &r_t(S_t,I_{t+1})\geq 0, \forall t=1,\ldots,N. \label{eq: realization 2}
\end{align}
It is straight forward to show that $p(S_N,I_{N+1})=r_N(S_N,I_{N+1})\prod_{t=1}^{N-1}Q_{a_{t}}(k_t,k_{t+1})$.

Before presenting the simplified LP formulation for uninformed players, we would like to introduce some variables used in the LP formulation. Based on Corollary \ref{corollary: regret based, primal, uninformed}, we only consider player 2's strategies that depend on the informed player's history action only.
Let $F_t=(S_t,I_t)\in K^t\times A^{t-1}$ be the full information the informed player has at the beginning of stage $t$. Let $y_{I_t}=\tau_t(I_t)\in \Delta(B)$, $y=(y_{I_t})_{I_t\in A^{t-1}, t=1,\ldots,N}$, and $Y$ be the set of all possible values that y can take. Let $\ell_{F_t}\in \mathbb{R}$ be a real variable. We use $\ell=(\ell_{F_t})_{F_t\in K^t\times A^{t-1},t=1,\ldots,N}$ to denote the collection of the $\ell$ variable, and use $L$ to denote the set of all possible values that $l$ can take. The simplified LP formulation is given below.
\begin{theorem}
  \label{theorem: LP, history based, primal, uninformed}
Consider a primal asymmetric information stochastic game $\Gamma_n(p)$. Its game value $v_n(p)$ satisfies
\begin{align}
  v_n(p)=&\min_{y\in Y}\min_{\ell\in L} \sum_{k\in K}p^k \ell_{F_1}, &&where\ F_1=((k,\emptyset)), \label{eq: LP, uninformed, primal 1}\\
  s.t.& \sum_{k_t\in K} \ell_{F_t}\leq \ell_{F_{t-1}}, &&\forall F_t\in K^t\times A^{t-1}, \forall t=2,\ldots,n, \nonumber\\
                                                    &&& where\ F_t=(F_{t-1},(k_t,a_{t-1})) \label{eq: LP, uninformed, primal 2}\\
  &\prod_{t=1}^{n-1}Q_{a_t}(k_t,k_{t+1}) \sum_{t=1}^n M_{k_t}(a_t,:) y_{I_t} \leq \ell_{F_n},& &\forall F_n\in K^n\times A^{n-1}, \forall a_n\in A. \nonumber \\
  &&& where\ F_n=((k_1,\emptyset),\ldots,(k_n,a_{n-1}))\label{eq: LP, uninformed, primal 3}
\end{align}
A security strategy $\tau^*_t(I_t)$ of player 2 at stage $t$ given the action history $I_t$ of player 1 is $\tau^*_t(I_t)=y^*_{I_t}$, where $y^*$ is the optimal solution of LP (\ref{eq: LP, uninformed, primal 1}-\ref{eq: LP, uninformed, primal 3}).
\end{theorem}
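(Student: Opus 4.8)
The plan is to realize player~2's side of $\Gamma_n(p)$ as a linear program in the sequence (realization-plan) form and then read off the stated LP as its dual. Write $\tau_y$ for player~2's strategy with $\tau_t(I_t)=y_{I_t}$. First, by Corollary~\ref{corollary: regret based, primal, uninformed} player~2 has a security strategy lying in $Y$; since $Y\subseteq\mathcal{T}_n$ and $v_n(p)=\bar v_n(p)$, this gives $v_n(p)=\min_{y\in Y}\max_{\sigma\in\Sigma_n}\gamma_n(p,\sigma,\tau_y)$, the inequality ``$\le$'' being trivial from $Y\subseteq\mathcal{T}_n$ and ``$\ge$'' coming from the security strategy itself. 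I would also record that, because $\tau_y$ randomizes at stage $t$ only as a function of $I_t$, the realized actions of player~2 are uninformative to player~1 and irrelevant to the transitions, so in this inner maximization it is without loss to take $\sigma_t$ measurable with respect to $(S_t,I_t)$ only --- exactly the strategies to which the realization plan $r_t(S_t,I_{t+1})$ refers; this is checked by averaging an arbitrary $\sigma$ over $J_t$ against its (strategy-independent) conditional law.

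Next I fix $y\in Y$ and turn the inner maximization into an LP over realization plans. Using the identity $P_{p,\sigma,\tau_y,Q}(S_n,I_{n+1})=r_n(S_n,I_{n+1})\prod_{t=1}^{n-1}Q_{a_t}(k_t,k_{t+1})$ noted just before the theorem,
\[
\gamma_n(p,\sigma,\tau_y)=\sum_{S_n,\,I_{n+1}} r_n(S_n,I_{n+1})\,\Big(\prod_{t=1}^{n-1}Q_{a_t}(k_t,k_{t+1})\Big)\sum_{t=1}^{n}M_{k_t}(a_t,:)\,y_{I_t},
\]
so $\gamma_n(p,\sigma,\tau_y)$ depends on $\sigma$ only through its realization plan $r$. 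By the standard characterization of realization plans (cf.\ \cite{koller1996efficient}) --- the set $R$ is exactly the $r\ge 0$ obeying (\ref{eq: realization 1}) and the stage-$1$ relation $\sum_{a_1}r_1((k_1),(a_1))=p(k_1)$, and every such $r$ is induced by some behavior strategy --- it follows that $\max_{\sigma}\gamma_n(p,\sigma,\tau_y)$ equals the value of the linear program that maximizes the displayed functional over $r\in R$.

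I would then dualize this program, attaching a free variable $\ell_{F_t}$ to each flow constraint in (\ref{eq: realization 1}), indexed by $F_t=(S_t,I_t)$, and to each stage-$1$ constraint, whose right-hand side $p(k)$ becomes the objective coefficient of $\ell_{F_1}$. A columnwise count --- each $r_t(S_t,I_{t+1})$ with $t<n$ appears with coefficient $+1$ in the unique stage-$t$ constraint indexed by $F_t$ and with coefficient $-1$ in the $|K|$ stage-$(t+1)$ constraints indexed by $(F_t,(k_{t+1},a_t))$, whereas $r_n(S_n,I_{n+1})$ appears only in the stage-$n$ constraint (coefficient $+1$) and carries objective coefficient $\prod_{t=1}^{n-1}Q_{a_t}(k_t,k_{t+1})\sum_{t=1}^{n}M_{k_t}(a_t,:)y_{I_t}$ --- produces exactly the objective $\sum_k p(k)\ell_{F_1}$, the inequalities (\ref{eq: LP, uninformed, primal 2}), and the inequalities (\ref{eq: LP, uninformed, primal 3}) (the ``$\forall a_n$'' because $F_n$ and $a_n$ jointly index the stage-$n$ variables). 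The primal LP is feasible and bounded ($\sum_{S_n,I_{n+1}}r_n=|K|^{n-1}$ and $M$ is finite), so strong LP duality yields $\min_{\ell\in L}\sum_k p(k)\ell_{F_1}=\max_{\sigma}\gamma_n(p,\sigma,\tau_y)$ for each fixed $y$, with the minimum attained; taking $\min_{y\in Y}$ and using the first paragraph establishes (\ref{eq: LP, uninformed, primal 1}). For the strategy claim, if $(y^\ast,\ell^\ast)$ is optimal then $\max_{\sigma}\gamma_n(p,\sigma,\tau_{y^\ast})=v_n(p)=\bar v_n(p)$, so $\tau^\ast_t(I_t)=y^\ast_{I_t}$ guarantees player~2's security level and is a security strategy.

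The step I expect to be the main obstacle is the bookkeeping in the dualization: the dual variables are naturally indexed by the \emph{full} informed-player information $F_t=(S_t,I_t)$, and one must verify that the incidence structure of (\ref{eq: realization 1}) makes each stage-$t$ realization variable appear in exactly one stage-$t$ constraint and exactly $|K|$ stage-$(t+1)$ constraints, so that the generic dual constraint collapses to $\sum_{k_t}\ell_{F_t}\le\ell_{F_{t-1}}$ with $F_t=(F_{t-1},(k_t,a_{t-1}))$, and the last-stage constraint to (\ref{eq: LP, uninformed, primal 3}). A secondary point needing care is the precise statement, in the single-controller setting, of the behavior-strategy/realization-plan equivalence that licenses replacing ``$\max$ over $\sigma$'' by ``$\max$ over $r\in R$''.
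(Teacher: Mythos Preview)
Your proposal is correct and follows essentially the same route as the paper's own proof: restrict player~2 to $Y$ via Corollary~\ref{corollary: regret based, primal, uninformed}, rewrite the inner maximization over $\sigma$ as a linear program in the realization plan $r\in R$, and invoke strong LP duality to obtain (\ref{eq: LP, uninformed, primal 1})--(\ref{eq: LP, uninformed, primal 3}). The paper's proof is considerably terser---it moves from $\min_{\tau}\max_{\sigma}$ to $\min_{y}\max_{r}$ in two lines and then simply says ``according to the strong duality theorem''---whereas you spell out the dualization bookkeeping (the indexing of dual variables by $F_t$, the column incidence), the justification for replacing $\max_\sigma$ by $\max_{r\in R}$, and the final verification that $y^\ast$ is a security strategy; but the underlying argument is the same.
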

\begin{proof}
\begin{align*}
  v_n(p)=&\min_{\tau\in \mathcal{T}_n}\max_{\sigma\in \Sigma_n}\sum_{S_n\in K^n, I_{n+1}\in A^n} p(S_n,I_{n+1})E(\sum_{t=1}^n M_{k_t}(a_t,b_t)|S_n,I_{n+1})\\
  =&\min_{y\in Y}\max_{\sigma\in \Sigma_n} \sum_{S_n\in K^n, I_{n+1}\in A^n} p(k_1)\prod_{t=1}^n \sigma_t^{a_t}(k_t,I_t)\prod_{t=1}^{n -1} Q_{a_t}(k_t,k_{t+1})\sum_{t=1}^n M_{k_t}(a_t,:) y_{I_t}\\
  =&\min_{y\in Y}\max_{r\in R} \sum_{S_n\in K^n, I_{n+1}\in A^n} r(S_n,I_{n+1})\left(\prod_{t=1}^{n -1} Q_{a_t}(k_t,k_{t+1})\sum_{t=1}^n M_{k_t}(a_t,:) y_{I_t}\right),\\
  s.t. & equation (\ref{eq: realization 1}-\ref{eq: realization 2}).
\end{align*}
According to the strong duality theorem, equation (\ref{eq: LP, uninformed, primal 1}-\ref{eq: LP, uninformed, primal 3}) is shown.
\end{proof}

The size of the LP problem in (\ref{eq: LP, uninformed, primal 1}-\ref{eq: LP, uninformed, primal 3}) is $O(|K|^n|A|^n|B|)$. Let us first look at the variable size. Variable $y$ has a size of $(1+|A|+\ldots+|A|^{n-1})|B|$ which is of order $|A|^n|B|$. Variable $\ell$ is of size $|K|(1+|A||K|+\ldots+(|A||K|)^{n-1})$ which is of order $|A|^n|K|^n$. Next, we will analyze the constraint size. Constraint \ref{eq: LP, uninformed,   primal 2} has a size of $(|A||K|+(|A||K|)^2+\ldots+(|A||K|)^{n-1}$ which is of order $|A|^n|K|^n$. The size of constraint \ref{eq: LP, uninformed, primal 3} is also of order $|A|^n|K|^n$. Therefore, in all, we see that the size of the LP problem (\ref{eq: LP, uninformed, primal 1}-\ref{eq: LP, uninformed, primal 3}) is $O(|K|^n|A|^n|B|)$.

The LP formulation provides us not only with a history based security strategy for player 2, but also the regret $\hat{\alpha}_1$ at stage 1 in the primal game.
\begin{proposition}
  \label{lemma: initial regret}
  Let $y^*,\ell^*$ be the optimal solution of LP problem (\ref{eq: LP, uninformed, primal 1}-\ref{eq: LP, uninformed, primal 3}). The initial regret of the primal game $\Gamma_n(p)$ is $-\ell^*_1$, where $\ell_1^*=(\ell^*_{F_1})_{F_1\in K\times \emptyset}$, i.e. $-\ell^*_1$ is an optimal solution to the minimum problem in equation (\ref{eq: v n}).
\end{proposition}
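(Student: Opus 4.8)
The plan is to reduce the claim to the two value relations in Theorem~\ref{theroem: game value relations} combined with a single structural observation about the LP of Theorem~\ref{theorem: LP, history based, primal, uninformed}: its feasible region is \emph{independent of the initial distribution}. Indeed, constraints (\ref{eq: LP, uninformed, primal 2}) and (\ref{eq: LP, uninformed, primal 3}) involve only $M$, $Q$, $y$ and $\ell$, and never $p$; the initial distribution enters only through the linear objective $\sum_{k\in K}p^k\ell_{F_1}=p^T\ell_1$. So a single optimal pair $(y^*,\ell^*)$ will serve as a certificate for the value of $\Gamma_n(q)$ for \emph{every} $q$, and from there $-\ell^*_1$ will turn out to be an optimal dual payoff.

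First I would observe that, applying Theorem~\ref{theorem: LP, history based, primal, uninformed} to $\Gamma_n(q)$ for an arbitrary $q\in\Delta(K)$, $v_n(q)$ equals the optimum of the linear program $\min\{\,q^T\ell_1:\ (y,\ell)\in Y\times L \text{ satisfies }(\ref{eq: LP, uninformed, primal 2})\text{--}(\ref{eq: LP, uninformed, primal 3})\,\}$, where $\ell_1=(\ell_{F_1})_{F_1\in K\times\emptyset}$. Since $(y^*,\ell^*)$ is optimal --- hence feasible --- for the program attached to $p$, and the feasible set does not depend on the initial distribution, $(y^*,\ell^*)$ is feasible for the program attached to any $q$, whence
\[
v_n(q)\ \le\ q^T\ell^*_1\qquad\text{for all }q\in\Delta(K),
\]
i.e.\ $v_n(q)+q^T(-\ell^*_1)\le 0$ for all $q\in\Delta(K)$.

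Next I would feed this into equation (\ref{eq: w n}) of Theorem~\ref{theroem: game value relations} with $\alpha=-\ell^*_1$:
\[
w_n(-\ell^*_1)=\max_{q\in\Delta(K)}\bigl\{v_n(q)+q^T(-\ell^*_1)\bigr\}\ \le\ 0 .
\]
On the other hand, optimality of $(y^*,\ell^*)$ for the objective attached to $p$ gives $v_n(p)=p^T\ell^*_1$ (Theorem~\ref{theorem: LP, history based, primal, uninformed} again), so
\[
w_n(-\ell^*_1)-p^T(-\ell^*_1)=w_n(-\ell^*_1)+p^T\ell^*_1\ \le\ 0+v_n(p)=v_n(p).
\]
Since $v_n(p)$ is, by definition (\ref{eq: v n}), the \emph{minimum} of $\alpha\mapsto w_n(\alpha)-p^T\alpha$, the reverse inequality $w_n(-\ell^*_1)-p^T(-\ell^*_1)\ge v_n(p)$ holds automatically, so equality is forced and $-\ell^*_1$ attains the minimum in (\ref{eq: v n}). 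This is exactly the assertion; by Definition~\ref{definition: regret} it also identifies $-\ell^*_1$ as a valid choice of the stage-$1$ regret $\hat{\alpha}_1$ of $\Gamma_n(p)$.

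I do not expect a genuine obstacle here; the only point that needs care is the first step --- making explicit that the LP of Theorem~\ref{theorem: LP, history based, primal, uninformed} has a $p$-independent feasible set, so that one optimal $(y^*,\ell^*)$ simultaneously certifies $v_n(q)\le q^T\ell^*_1$ for all $q$ --- after which everything is a mechanical combination of (\ref{eq: v n}) and (\ref{eq: w n}). A more hands-on alternative would avoid invoking (\ref{eq: w n}): construct the strategy $\tau^*_t(I_t)=y^*_{I_t}$ and show, by induction up the game tree using (\ref{eq: LP, uninformed, primal 2})--(\ref{eq: LP, uninformed, primal 3}), that it guarantees $\gamma_n(e_k,\sigma,\tau^*)\le \ell^*_{(k,\emptyset)}$ for every state $k$ and every $\sigma$, hence $w_n(-\ell^*_1)\le 0$ directly from the definition of $w_n$. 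That route essentially re-derives weak duality for the sequence-form LP and is strictly longer, so I would keep the short value-relation argument above.
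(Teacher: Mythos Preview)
Your proof is correct and in fact cleaner than the paper's. Both arguments establish $w_n(-\ell^*_1)\le 0$ and then combine this with $v_n(p)=p^T\ell^*_1$ and (\ref{eq: v n}) to conclude. The difference is in how $w_n(-\ell^*_1)\le 0$ is obtained. You observe that the feasible set of the LP (\ref{eq: LP, uninformed, primal 1}--\ref{eq: LP, uninformed, primal 3}) is $p$-independent, so the single feasible pair $(y^*,\ell^*)$ certifies $v_n(q)\le q^T\ell^*_1$ for every $q\in\Delta(K)$ directly via Theorem~\ref{theorem: LP, history based, primal, uninformed}, and then (\ref{eq: w n}) gives the bound. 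The paper instead takes exactly the ``hands-on alternative'' you sketch at the end: it fixes the strategy $\tau^*$ induced by $y^*$ and, by an explicit induction on $t$ using constraints (\ref{eq: LP, uninformed, primal 2})--(\ref{eq: LP, uninformed, primal 3}), shows that $\mathbf{E}\bigl(\sum_t M_{k_t}(a_t,b_t)\mid k_1=k\bigr)\le \ell^*_{(k,\emptyset)}$ for every $k$ and every $\sigma$, whence $v_n(p')\le p'^T\ell^*_1$ for all $p'$ and $w_n(-\ell^*_1)\le 0$. Your route buys brevity by reusing Theorem~\ref{theorem: LP, history based, primal, uninformed} as a black box; the paper's route is self-contained at the level of the LP constraints and makes the state-by-state guarantee of $y^*$ explicit, which is conceptually closer to the definition (\ref{eq: initial regret}) of the initial regret.
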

\begin{proof}
First, we show that $w_n(-\ell^*_1) \geq 0$. Equation (\ref{eq: w n}) indicates that $w_n(-\ell^*_1)\geq v_n(p)-p^T \ell^*_1=0$.

Second, we show that $w_n(-\ell^*_1)\leq 0$. To this end, we first show that $\bar{\alpha}_{y^*}(k)=\mathbf{E}(\sum_{t=1}^n M_{k_t}(a_t,b_t)|k_1=k)\leq \ell^*_1(k)$, for all $k\in K$.
\begin{align*}
  \bar{\alpha}_{y^*}(k)
  =&\sum_{k_{2:n}\in K^{n-1}}\sum_{a_{1:n}\in A^n}P(k_2,\ldots,k_n,a_1,\ldots,a_n|k_1=k)\\
  &\mathbf{E}\left(\sum_{t=1}^n M_{k_t}(a_t,b_t))|k_1=k,k_2,\ldots,k_n,a_1,\ldots,a_n)\right)\\
  =&\sum_{k_{2:n}\in K^{n-1}}\sum_{a_{1:n}\in A^n}\prod_{t=1}^n\sigma_t^{a_t}(k_t,I_t)\prod_{t=1}^{n-1}Q_{a_{t}}(k_t,k_{t-1}) \sum_{t=1}^nM_{k_t}(a_t,:)y^*_{I_t}.
\end{align*}
Equation (\ref{eq: LP, uninformed, primal 3}) implies that
\begin{align*}
  \bar{\alpha}_{y^*}(k)\leq &\sum_{k_{2:n}\in K^{n-1}}\sum_{a_{1:n}\in A^n}\prod_{t=1}^n\sigma_t^{a_t}(k_t,I_t)\ell_{F_n}\\
  &where\ F_n=((k,\emptyset),(k_2,a_1),\ldots,(k_n,a_{n-1})) \\
  =&\sum_{k_{2:n}\in K^{n-1}}\sum_{a_{1:n-1}\in A^{n-1}}\prod_{t=1}^{n-1}\sigma_t^{a_t}(k_t,I_t)\ell_{F_n}\\
  \leq & \sum_{k_{2:n-1}\in K^{n-2}}\sum_{a_{1:n-1}\in A^{n-1}}\prod_{t=1}^{n-1}\sigma_t^{a_t}(k_t,I_t) l_{F_{n-1}}
\end{align*}
where the last inequality is derived from equation (\ref{eq: LP, uninformed,   primal 2}). Following the same steps, we can show that $\bar{\alpha}_{y^*}(k)\leq \ell^*_1(k)$. For any $p'\in \Delta(K)$, we have $\displaystyle v_n(p')\leq \max_{\sigma\in \Sigma_n}\gamma_n(p',\sigma,y^*) =p'^T \bar{\alpha}_{y^*}\leq p'^T \ell^*_1$. Therefore, $w_n(-\ell^*_1)=\max_{p'\in \Delta(K)} v_n(p')-p'^T \ell^*_1 \leq 0$.

Therefore, we have $w_n(-\ell^*_1)=0$. From equation (\ref{eq: LP, uninformed, primal 1}), we have $w_n(-\ell^*_1)-p^T(-\ell^*_1)=v_n(p)$. This completes the proof.
\end{proof}

\subsection{LP formulation of regret based security strategy}
As discussed in Section \ref{subsection: strategy property, uninformed}, the uninformed player can construct a regret based security strategy following Algorithm \ref{algorithm: regret based, primal}. Proposition \ref{lemma: initial regret} provides an LP formulation to compute the regret at stage 1. In this section, we further study this LP formulation and show how the regret vector can be efficiently updated. To this end, we first introduce the LP formulation to compute the game value of a dual game $G_n(\alpha)$.

\begin{proposition}
\label{lemma: LP, dual, history}
The game value $w_n(\alpha)$ of a dual game $G_n(\alpha)$ satisfies
\begin{align}
  w_n(\alpha)=& \min_{y\in Y}\min_{\ell\in L}\min_{\hat{\ell}\in\mathbb{R}} \hat{\ell} \label{eq: LP, uninformed, dual 1} \\
  s.t.& \alpha(k)+\ell_{F_1} \leq \hat{\ell}, &&\forall k\in K, where\ F_1=((k,\emptyset)), \label{eq: LP, uninformed, dual 2}\\
  & \sum_{k_t\in K} \ell_{F_t}\leq l_{F_{t-1}},&& \forall F_t\in K^t\times A^{t-1}, \forall t=2,\ldots,n,\nonumber\\
  &&& where F_t=(F_{t-1},(k_t,a_{t-1})) \label{eq: LP, uninformed, dual 3}\\
  &\prod_{t=1}^{n-1}Q_{a_t}(k_t,k_{t+1}) \sum_{t=1}^n M_{k_t}(a_t,:) y_{I_t} \leq \ell_{F_n},& &\forall F_n\in K^n\times A^{n-1}, \forall a_n\in A. \label{eq: LP, uninformed, dual 4}
\end{align}
\end{proposition}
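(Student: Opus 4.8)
The plan is to obtain the LP by substituting the primal LP of Theorem~\ref{theorem: LP, history based, primal, uninformed} into the duality identity~(\ref{eq: w n}). By Theorem~\ref{theroem: game value relations}, $w_n(\alpha)=\max_{p\in\Delta(K)}\{v_n(p)+p^T\alpha\}$; by Theorem~\ref{theorem: LP, history based, primal, uninformed}, $v_n(p)=\min_{(y,\ell)\in\mathcal{F}}\sum_{k\in K}p(k)\,\ell_{F_1}$ with $F_1=((k,\emptyset))$, where $\mathcal{F}\subseteq Y\times L$ denotes the (nonempty) set of pairs $(y,\ell)$ satisfying the linear constraints~(\ref{eq: LP, uninformed, primal 2})--(\ref{eq: LP, uninformed, primal 3}). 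Since $\mathcal{F}$ does not depend on $p$, these combine into
\begin{align*}
w_n(\alpha)=\max_{p\in\Delta(K)}\ \min_{(y,\ell)\in\mathcal{F}}\ \sum_{k\in K}p(k)\,\big(\alpha(k)+\ell_{F_1}\big).
\end{align*}

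Next I would exchange $\max_p$ and $\min_{(y,\ell)}$. The set $\mathcal{F}$ is a convex polyhedron ($Y$ is a product of simplices, $L$ is an unconstrained real space, and~(\ref{eq: LP, uninformed, primal 2})--(\ref{eq: LP, uninformed, primal 3}) are linear), $\Delta(K)$ is convex and compact, and the objective is affine in $p$ for fixed $(y,\ell)$ and affine in $(y,\ell)$ for fixed $p$; hence Sion's minimax theorem~\cite{sion1958general} applies. Since a linear function over the simplex attains its maximum at a vertex,
\begin{align*}
w_n(\alpha)&=\min_{(y,\ell)\in\mathcal{F}}\ \max_{p\in\Delta(K)}\ \sum_{k\in K}p(k)\,\big(\alpha(k)+\ell_{F_1}\big)\\
&=\min_{(y,\ell)\in\mathcal{F}}\ \max_{k\in K}\big(\alpha(k)+\ell_{F_1}\big).
\end{align*}

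To finish, I would put the inner maximum into epigraph form: introducing a scalar variable $\hat\ell\in\mathbb{R}$, the quantity $\max_{k\in K}(\alpha(k)+\ell_{F_1})$ equals $\min\{\hat\ell:\alpha(k)+\ell_{F_1}\le\hat\ell\ \forall k\in K\}$. Combining this epigraph constraint~(\ref{eq: LP, uninformed, dual 2}) with the explicit description of $\mathcal{F}$ by~(\ref{eq: LP, uninformed, primal 2})--(\ref{eq: LP, uninformed, primal 3}), which reappear verbatim as~(\ref{eq: LP, uninformed, dual 3})--(\ref{eq: LP, uninformed, dual 4}), yields precisely the LP~(\ref{eq: LP, uninformed, dual 1})--(\ref{eq: LP, uninformed, dual 4}).

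The main obstacle is the minimax exchange: one has to confirm that $\mathcal{F}$ is genuinely convex and that the bilinear objective is continuous (hence upper/lower semicontinuous) in each variable separately, so that Sion's theorem is legitimately invoked despite $L$ being unbounded --- this is harmless because the objective is affine and the outer variable ranges over the compact set $\Delta(K)$, so $w_n(\alpha)$ is finite and the resulting LP is feasible and bounded, whence the stated $\min$ is attained. The vertex characterization of $\max_{p\in\Delta(K)}$ and the epigraph reformulation are routine bookkeeping, and no new estimates are required.
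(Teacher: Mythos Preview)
Your proposal is correct and follows essentially the same route as the paper: express $w_n(\alpha)$ via the LP for $v_n(p)$, invoke Sion's minimax theorem to move the maximization over $p\in\Delta(K)$ inside, and then replace $\max_{p\in\Delta(K)}$ by an epigraph variable $\hat\ell$. The only cosmetic difference is that the paper starts from $w_n(\alpha)=\min_{y}\max_{p}\max_{\sigma}(\cdots)$ so that $\min_y$ is already outermost and Sion is applied only to swap $\max_p$ with $\min_\ell$ (for fixed $y$), whereas you start from~(\ref{eq: w n}) and swap $\max_p$ with the joint $\min_{(y,\ell)\in\mathcal F}$; since $\mathcal F$ is convex and $\Delta(K)$ compact, this joint swap is equally legitimate.
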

\begin{proof}
  \begin{align*}
    w_n(\alpha)=&\min_{y\in Y}\max_{p\in \Delta(K)}\max_{\sigma\in \Sigma_n}p^T\alpha+\mathbf{E}(\gamma_n(p,\sigma,\tau))\\
    =&\min_{y\in Y}\max_{p\in \Delta(K)}p^T\alpha+\max_{\sigma\in \Sigma_n}\mathbf{E}(\gamma_n(p,\sigma,\tau))
  \end{align*}
Following the same steps as in the proof of Theorem \ref{theorem: LP, history based, primal, uninformed}, we have
\begin{align*}
  w_n(\alpha)=& \min_{y\in Y}\max_{p\in \Delta(K)}p^T\alpha+ \min_{\ell\in L} \sum_{k\in K}p^k \ell_{F_1}, where\ F_1=((k,\emptyset))\\
  s.t. equation (\ref{eq: LP, uninformed,   primal 2}-\ref{eq: LP, uninformed, primal 3}).
\end{align*}
Since $p^T\alpha+\sum_{k\in K}p^k \ell_{F_1}$ is linear in both $p$ and $l$, according to Sion's minimax theorem \cite{sion1958general}, we have
\begin{align*}
  w_n(\alpha)=&\min_{y\in Y}\min_{\ell\in L} \max_{p\in \Delta(K)}p^T\alpha+ \sum_{k\in K}p^k \ell_{F_1}\\
  s.t. equation (\ref{eq: LP, uninformed,   primal 2}-\ref{eq: LP, uninformed, primal 3}).
\end{align*}
According to the strong duality theorem, equation (\ref{eq: LP, uninformed, dual 1}-\ref{eq: LP, uninformed, dual 4}) is shown.
\end{proof}

Now, we are ready to present the LP formulation to compute the regret based security strategy of player 2 and to update the regret in a primal game. With a little abuse of notation $y$ and $\ell$, we use $y_{a,I_t}\in \Delta(B)$ to indicate a $|B|$ dimensional probability variable given $a\in A$ and $I_t\in A^{t-1}$, and $\ell_{a,F_t}\in \mathbb{R}$ to denote a scalar variable given $a\in A$ and $F_t\in K^t\times A^{t-1}$. The collection of $(y_{a,I_t})){I_t\in A^{t-1},t=1,\ldots,n}$ is denoted as $y_a$, and the collections of $(\ell_{a,F_t})_{F_t\in K^t\times A^{t-1},t=1,\ldots,n}$ is denoted by $\ell_a$.
\begin{theorem}
Consider a primal game $\Gamma_{n+1}(p)$. Let $\hat{\alpha}_1$ be the regret at stage 1 in the primal game. The game value $w_{n+1}(\hat{\alpha}_1)$ of the dual game $G_{n+1}(\hat{\alpha}_1)$ satisfies
\begin{align}
  w_{n+1}(\hat{\alpha}_1)=&\min_{\hat{y}\in \Delta(B)}\min_{(\beta_a\in \mathbb{R}^{|K|})_{a\in A}}\min_{\tilde{\ell}\in \mathbb{R}}\min_{(\hat{\ell}_a\in \mathbb{R})_{a\in A} }\min_{(y_a\in Y,\ell_a\in L)_{a\in A}} \tilde{\ell} \label{eq: LP, regret based, dual 1}\\
  s.t.& M_k(a,:)\hat{y}+\hat{\alpha}_1(k) -Q_a(k,:)\beta_a+\hat{\ell}_a \leq \tilde{\ell}, \forall a\in A, k\in K, \label{eq: LP, regret based, dual 2}\\
  & \beta_a(k)+\ell_{a,F_1} \leq \hat{\ell}_a, \forall a\in A, k\in K, where\ F_1=((k,\emptyset)) \label{eq: LP, regret based, dual 3}\\
  & \sum_{k_t\in K} \ell_{a,F_t}\leq \ell_{a,F_{t-1}},  \forall a\in A,\forall F_t\in K^t\times A^{t-1},\forall t=2,\ldots,n,\nonumber \\
  & where\ F_t=(F_{t-1},(k_t,a_{t-1})) \label{eq: LP, regret based, dual 4}\\
  &\prod_{t=1}^{n-1}Q_{a_t}(k_t,k_{t+1}) \sum_{t=1}^n M_{k_t}(a_t,:) y_{a,I_t} \leq \ell_{a,F_n}, \forall a\in A, F_n\in K^n\times A^{n-1}, \forall a_n\in A.\label{eq: LP, regret based, dual 5}
\end{align}
Moreover, player 2's security strategy at the current step is $\hat{y}^*$ and the regret at the next step is $\beta^*_a$ if the current action of player 1 is $a$.
\end{theorem}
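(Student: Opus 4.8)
The plan is to obtain the LP by composing the one-step recursion for the dual value (Proposition~\ref{lemma: recursive formula and H B independent property}) with the history-based LP that computes a dual value (Proposition~\ref{lemma: LP, dual, history}), and then to read off the security-strategy and regret-update assertions from Theorem~\ref{theorem: regret based security strategy, dual game}, Definition~\ref{definition: regret}, and Corollary~\ref{corollary: regret based, primal, uninformed}.

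\emph{Step 1: put the recursion in epigraph form.} Apply Proposition~\ref{lemma: recursive formula and H B independent property} with $\alpha=\hat{\alpha}_1$. The bracketed objective is linear in $\Pi$, so the inner maximum over $\Delta(K\times A)$ is attained at a vertex and equals $\max_{k\in K,a\in A}\big(M_k(a,:)\hat{y}+\hat{\alpha}_1(k)-Q_a(k,:)\beta_a+w_n(\beta_a)\big)$. Introducing a scalar epigraph variable $\tilde{\ell}$ for this maximum gives
\begin{align*}
w_{n+1}(\hat{\alpha}_1)=&\min_{\hat{y}\in\Delta(B)}\ \min_{(\beta_a)_{a\in A}}\ \min_{\tilde{\ell}\in\mathbb{R}}\ \tilde{\ell}\\
&\text{s.t.}\quad M_k(a,:)\hat{y}+\hat{\alpha}_1(k)-Q_a(k,:)\beta_a+w_n(\beta_a)\le\tilde{\ell},\quad \forall a\in A,\ k\in K.
\end{align*}

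\emph{Step 2: substitute the dual-game LP for $w_n(\beta_a)$.} For each $a$, Proposition~\ref{lemma: LP, dual, history} with $\alpha$ replaced by $\beta_a$ writes $w_n(\beta_a)=\min_{y_a\in Y,\,\ell_a\in L,\,\hat{\ell}_a\in\mathbb{R}}\hat{\ell}_a$ subject to the copies of (\ref{eq: LP, uninformed, dual 2})--(\ref{eq: LP, uninformed, dual 4}), which are exactly (\ref{eq: LP, regret based, dual 3})--(\ref{eq: LP, regret based, dual 5}). Because $w_n(\beta_a)$ enters only on the small side of the binding inequality of Step~1 and the whole program is a minimization, for each fixed $a$ the family of constraints ``$M_k(a,:)\hat{y}+\hat{\alpha}_1(k)-Q_a(k,:)\beta_a+w_n(\beta_a)\le\tilde{\ell}$ for all $k$'' is equivalent to ``there exist $(y_a,\ell_a,\hat{\ell}_a)$ feasible for that dual LP with $M_k(a,:)\hat{y}+\hat{\alpha}_1(k)-Q_a(k,:)\beta_a+\hat{\ell}_a\le\tilde{\ell}$ for all $k$'', the latter being (\ref{eq: LP, regret based, dual 2}); here one uses $w_n(\beta_a)\le\hat{\ell}_a$ on feasible points, noting that $\hat{\ell}_a$ is free of $k$ while the same value must meet every $k$-constraint simultaneously. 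Merging these inner minimizations over $(y_a,\ell_a,\hat{\ell}_a)_{a\in A}$ into the outer minimization over $(\hat{y},(\beta_a),\tilde{\ell})$ is value-preserving and yields precisely (\ref{eq: LP, regret based, dual 1})--(\ref{eq: LP, regret based, dual 5}).

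\emph{Step 3: security strategy and regret update.} By construction the optimal $\hat{y}^*$ and $(\beta_a^*)_{a\in A}$ of this LP form an optimal solution of the minmax problem (\ref{eq: recursive formula dual}), equivalently of (\ref{eq: regret update, dual}) with $\alpha_t=\hat{\alpha}_1$; hence by Definition~\ref{definition: regret} $\beta_a^*$ is the regret at the next stage when player~1 plays $a$. By Theorem~\ref{theorem: regret based security strategy, dual game} the strategy ``play $\hat{y}^*$ now, then the recursively built regret-based strategy'' is a security strategy of the uninformed player in $G_{n+1}(\hat{\alpha}_1)$; since $\hat{\alpha}_1$ solves the minimum in (\ref{eq: v n}) (Proposition~\ref{lemma: initial regret}, Definition~\ref{definition: regret}), Theorem~\ref{theorem: relation in security strategy of uninformed player} and Corollary~\ref{corollary: regret based, primal, uninformed} carry this over to $\Gamma_{n+1}(p)$, so $\hat{y}^*$ is the current-stage security strategy there as well.

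\emph{Main obstacle.} The delicate point is the value-preserving flattening in Step~2: one must verify that replacing the inner $\min\hat{\ell}_a$ defining $w_n(\beta_a)$ by a feasible $\hat{\ell}_a$ and then jointly minimizing does not change the optimum. This is sound because $w_n(\beta_a)$ occurs with the correct sign and a single $\hat{\ell}_a$ can be chosen to dominate $\max_k\big(M_k(a,:)\hat{y}+\hat{\alpha}_1(k)-Q_a(k,:)\beta_a\big)$, but it requires care with the simultaneous-over-$k$ quantifier and with the bookkeeping of the index families $F_t$ and the horizon ($n$ inside the substituted LP versus $n+1$ outside) when transcribing the constraints of Proposition~\ref{lemma: LP, dual, history}.
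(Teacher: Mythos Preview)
Your proposal is correct and follows essentially the same route as the paper: epigraph the inner $\max_\Pi$ in Proposition~\ref{lemma: recursive formula and H B independent property}, substitute the LP of Proposition~\ref{lemma: LP, dual, history} for each $w_n(\beta_a)$, flatten the nested minimizations, and then read off the strategy/regret claims from Theorem~\ref{theorem: regret based security strategy, dual game} and Definition~\ref{definition: regret}. The paper makes your ``main obstacle'' completely explicit by proving the two inequalities $\tilde{\ell}^*\le\tilde{\ell}^+$ and $\tilde{\ell}^*\ge\tilde{\ell}^+$ between the nested and flattened optima, which is exactly the two-direction check your Step~2 sketches; your feasibility-based equivalence argument is the same content in slightly more compressed form.
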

\begin{proof}
  Theorem \ref{theroem: game value relations} indicates that any security strategy of player 2 in dual game $G_{n+1}(\hat{\alpha}_1)$ is a security strategy of player 2 in the primal game $\Gamma_{n+1}(p)$.

  Let $\hat{y}^*$ and $\beta^*$ be the optimal solution to the minmax problem (\ref{eq: recursive formula dual}). According to Theorem \ref{theorem: regret based security strategy, dual game} and Definition \ref{definition: regret}, the current security strategy of player 2 in dual game $G_{n+1}(\hat{\alpha}_1)$ is the optimal solution $\hat{y}^*$, and the regret at the next stage is $\beta_a^*$ if the current action of player 1 is $a$. Now we need to build an LP to solve the minmax problem (\ref{eq: recursive formula dual}).

  According to the strong duality theorem, we have
  \begin{align}
    &\max_{\pi\in \Delta(K\times A)} \Pi(k,a)( M_k(a,:)\hat{y}+\hat{\alpha}_1(k) -Q_a(k,:)\beta_a+w_n(\beta_a)  )\\
    =&\min_{\tilde{\ell}\in\mathbb{R}} \tilde{\ell} \label{eq: -1}\\
    s.t.& M_k(a,:)\hat{y}+\hat{\alpha}_1(k) -Q_a(k,:)\beta_a+w_n(\beta_a) \leq \tilde{\ell}, \forall a\in A,k\in K \label{eq: -2}\\
    =& \min_{\tilde{\ell}\in\mathbb{R}} \tilde{\ell} \label{eq: 0}\\
    s.t.& M_k(a,:)\hat{y}+\hat{\alpha}_1(k) -Q_a(k,:)\beta_a+\min_{\hat{\ell}_a\in\mathbb{R}}\min_{(y_a\in Y,\ell_a\in L)_{a\in A}} \hat{\ell}_a \leq \tilde{\ell}, \forall a\in A,k\in K,  \label{eq: 1}
  \end{align}
  where $\hat{\ell},y_a,\ell_a, \tilde{\ell}$ satisfies
  \begin{align}
  & \beta_a(k)+\ell_{a,F_1} \leq \hat{\ell}_a, \forall a\in A, k\in K, where\ F_1=((k,\emptyset)) \label{eq: 2}\\
  & \sum_{k_t\in K} \ell_{a,F_t}\leq \ell_{a,F_{t-1}},  \forall a\in A,\forall F_t\in K^t\times A^{t-1},\forall t=2,\ldots,n,\nonumber \\
  & where\ F_t=(F_{t-1},(k_t,a_{t-1})) \label{eq: 3}\\
  &\prod_{t=1}^{n-1}Q_{a_t}(k_t,k_{t+1}) \sum_{t=1}^n M_{k_t}(a_t,:) y_{a,I_t} \leq \ell_{a,F_n}, \forall a\in A, F_n\in K^n\times A^{n-1}, \forall a_n\in A.\label{eq:4}
  \end{align}

  This is a nested LP. We will show that the optimal value of the embedded LP is the same as the the optimal value of the following LP.
  \begin{align}
   &\min_{\tilde{\ell}\in\mathbb{R}}\min_{(\hat{\ell}_a\in\mathbb{R})_{a\in A}}\min_{(y_a\in Y,\ell_a\in L)_{a\in A}} \tilde{\ell} \label{eq: a1}\\
    s.t.& M_k(a,:)\hat{y}+\hat{\alpha}_1(k) -Q_a(k,:)\beta_a+\hat{\ell}_a \leq \tilde{\ell}, \forall a\in A,k\in K, \label{eq: a2}\\
       & \beta_a(k)+\ell_{a,F_1} \leq \hat{\ell}_a, \forall a\in A, k\in K, where\ F_1=((k,\emptyset))\label{eq: a3}\\
  & \sum_{k_t\in K} \ell_{a,F_t}\leq \ell_{a,F_{t-1}},  \forall a\in A,\forall F_t\in K^t\times A^{t-1},\forall t=2,\ldots,n,\nonumber \\
  & where\ F_t=(F_{t-1},(k_t,a_{t-1})),\label{eq: a4}\\
  &\prod_{t=1}^{n-1}Q_{a_t}(k_t,k_{t+1}) \sum_{t=1}^n M_{k_t}(a_t,:) y_{a,I_t} \leq \ell_{a,F_n}, \forall a\in A, F_n\in K^n\times A^{n-1}, \forall a_n\in A. \label{eq: a5}
  \end{align}

  Let $\tilde{l}^*, (\hat{l}^*_a)_{a\in A}, (y^*_a)_{a\in A}, (l_a^*)_{a\in A}$ be the optimal solution to (\ref{eq: 0}-\ref{eq:4}), and $\tilde{l}^+,$ $ (\hat{l}^+_a)_{a\in A},$ $ (y^+_a)_{a\in A}, (l_a^+)_{a\in A}$ be the optimal solution to (\ref{eq: a1}-\ref{eq: a5}). We first show that $\tilde{l}^*\leq \tilde{l}^+$. Since for any $a\in A$, $\hat{l}_a^+,y^+_a,l_a^+$ satisfy constraint (\ref{eq: a3}-\ref{eq: a5}), so $w_n(\beta_a)\leq \hat{l}_a^+$ for any $a\in A$. Hence, we have $M_k(a,:)\hat{y}+\hat{\alpha}_1(k) -Q_a(k,:)\beta_a+w_n(\beta_a)\leq M_k(a,:)\hat{y}+\hat{\alpha}_1(k) -Q_a(k,:)\beta_a+\hat{l}_a^+ \leq \tilde{l}^+$ for any $k\in K$ and $a\in A$. From equation (\ref{eq: -1}-\ref{eq: -2}), we see that $\tilde{l}^+$ is a feasible solution of (\ref{eq: -1}-\ref{eq: -2}), and $\tilde{l}^+ \geq \tilde{l}^*$.

  Next, we show that $\tilde{l}^*\geq \tilde{l}^+$. It is easy to see that $(\hat{l}^*_a)_{a\in A}, (y^*_a)_{a\in A}, (l_a^*)_{a\in A}$ satisfy constraint (\ref{eq: a3}-\ref{eq: a5}). Equation (\ref{eq: 1}) implies $M^k_{a,:}\hat{y}+\hat{\alpha}_1(k) -\sum_{k'\in K}Q^a_{k,k'}\beta^{k'}_a+ \hat{l}_a^* \leq \tilde{l}^*$, for any $ a\in A,k\in K$, and hence $\tilde{l}^*, (\hat{l}^*_a)_{a\in A}$ satisfies constraint (\ref{eq: a2}). Therefore, $(\hat{l}^*_a)_{a\in A}, $ $ (y^*_a)_{a\in A},$ $ (l_a^*)_{a\in A}$ is a feasible solution of LP (\ref{eq: a1}-\ref{eq: a5}), and $\tilde{l}^* \geq \tilde{l}^+$

  Therefore, $\tilde{l}^*=\tilde{l}^+$, and the optimal values of LP (\ref{eq: 0}-\ref{eq:4}) and LP (\ref{eq: a1}-\ref{eq: a5}) are the same.

  According to Proposition \ref{lemma: recursive formula and H B independent property}, (\ref{eq: LP, regret based, dual 1}-\ref{eq: LP, regret based, dual 5}) is true.
\end{proof}

Now, we will give the detailed algorithm to compute the regret based security strategy of the uninformed player in a primal game $\Gamma_N(p)$
\begin{algorithm}\hfill{}
  \label{algorithm: regret based, primal}
\begin{enumerate}
  \item Initialization
    \begin{enumerate}
      \item Read payoff matrices $M$, transition matrices $Q$, time horizon $N$ and initial probability $p$.
      \item Set $t=1$, and $p_t=p$.
    \end{enumerate}
  \item Compute the regret $\hat{\alpha}_1=-l_1^*$ at stage 1 by solving LP (\ref{eq: LP, uninformed, primal 1}-\ref{eq: LP, uninformed, primal 3}).
  \item Compute the security strategy $\hat{y}^*$ and the regret candidate $(\beta^*_a)_{a\in A}$ by solving LP (\ref{eq: LP, regret based, dual 1}-\ref{eq: LP, regret based, dual 5}) with $n=N-t$ and $\hat{\alpha}_1=\hat{\alpha}_t$.
  \item Draw an action in $B$ according to $\hat{y}^*$, and read player 1's action $a_t$.
  \item Update $t=t+1$, and set $\hat{\alpha}_t=\beta^*_{a_t}$.
  \item If $t\leq N$, go to step 3. Otherwise, end.
\end{enumerate}
\end{algorithm}

\section{Case study: Asymmetric information stochastic intrusion detection game}
\label{section: simulation}
Reference \cite{alpcan2010network} introduced a stochastic intrusion detection game. In this game, an administrator is assigned to protect a system from attacks. The administrator can do either high level maintenance (hl) or low level maintenance (ll) at every time stage, with $A=\{hl,ll\}$. If high level maintenance is done, the system is less vulnerable to attacks. Otherwise, the system is more vulnerable to attacks. We indicate the state of the system as vulnerable (v) or non-vulnerable (nv), i.e. $K=\{nv,v\}$. The transition matrices are given in Table \ref{table: transition matrix}. The attacker decides whether to launch an attack (a) or not (na) at every stage, i.e. $B=\{a,na\}$. The corresponding payoff of a vulnerable system is always lower than the payoff of a non-vulnerable system, which is reflected by the payoff matrices in Table \ref{table: payoff matrices}. While in \cite{alpcan2010network}, it is assumed that the attacker knows the original system state, we assume that the attacker cannot directly observe the system state at any stage. Besides the transition matrices, payoff matrices and the initial probability over the state, the attacker knows the administrator's action at every stage. But the payoff, or the actual influence due to the attack, is not known by the attacker. We model this intrusion detection problem as an asymmetric information stochastic game, and demonstrate our main results about history based and belief based security strategies of informed player, and history based and regret based security strategies of uninformed player in this model.
\begin{table}
\caption{Transition matrices $Q_a$ of stochastic intrusion detection game}
\label{table: transition matrix}
\center
\begin{tabular}{ccccccc}
  &      nv &        v& & &nv &v \\
  \cline{2-3} \cline{6-7}
nv&\multicolumn{1}{|c}{0.9}&\multicolumn{1}{c|}{0.1}& & nv&\multicolumn{1}{|c}{0.8}&\multicolumn{1}{c|}{0.2} \\
v &\multicolumn{1}{|c}{0.1}&\multicolumn{1}{c|}{0.9}& &
v&\multicolumn{1}{|c}{0.2}&\multicolumn{1}{c|}{0.8} \\
  \cline{2-3} \cline{6-7}
 &\multicolumn{2}{c}{$Q_{hl}$}& & & \multicolumn{2}{c}{$Q_{ll}$}
\end{tabular}
\end{table}
\begin{table}
  \caption{Payoff matrices $M_k$ of stochastic intrusion detection game}
  \label{table: payoff matrices}
  \center
  \begin{tabular}{ccccccc}
    & a & na & & &a & na \\
    \cline{2-3}\cline{6-7}
    hl&\multicolumn{1}{|c}{3}&\multicolumn{1}{c|}{-10}& & hl&\multicolumn{1}{|c}{3}&\multicolumn{1}{c|}{-11} \\
    ll &\multicolumn{1}{|c}{-1}&\multicolumn{1}{c|}{0}& &
    ll&\multicolumn{1}{|c}{-2}&\multicolumn{1}{c|}{0} \\
      \cline{2-3} \cline{6-7}
     &\multicolumn{2}{c}{$M_{nv}$}& & & \multicolumn{2}{c}{$M_{v}$}
  \end{tabular}
\end{table}

We set the time horizon $N=3$ and the initial probability $p_0=[0.5\ 0.5]$. The history based security strategy of the administrator (informed player) computed according to Theorem \ref{theorem: history based strategy} is given in the first two rows in Table \ref{table: security strategy, informed}, and the game value is $-3.4698$. We, then, follow Algorithm \ref{Algorithm: belief based, informed} to compute the belief based security strategy which is the same as the history based security strategy, and the updated belief of the corresponding $h_t^A$ is given in the last row of Table \ref{table: security strategy, informed}. This demonstrates Theorem \ref{theorem: sufficien statistics, informed} which says that the informed player has a security strategy that only depends on the current stage and the current belief.
\begin{table}
  \caption{Security strategy of informed player and belief update. Each element of the first two rows is administrator's security strategy $\sigma_t^{*T}(k_t,I_t)$. Each element of the last row is the belief $p_t^T$ given informed player's security strategy $\sigma^*$ and history action $I_t$. }
    \label{table: security strategy, informed}
  \center
    \begin{tabular}{|c|cc|cc|cc|cc|}
    \hline
     \backslashbox{$k_t$}{$I_t$}& \multicolumn{2}{|c|}{$\emptyset$}& \multicolumn{2}{|c|}{1}& \multicolumn{2}{|c|}{2}& \multicolumn{2}{|c|}{1,1} \\    \hline
    nv & 0& 1& 0 & 1& 0&1 &0&1\\ \hline
    v & 0.1875& 0.8125& 0.375&0.625& 0.1356& 0.8644& 0.375& 0.625\\ \hline
    belief & 0.5 & 0.5 & 0.8& 0.2& 0.1448 & 0.8552 & 0.8 &0.2 \\ \hline\hline
    \backslashbox{$k_t$}{$h_t^A$}& \multicolumn{2}{|c|}{1,2} & \multicolumn{2}{|c|}{2,1}& \multicolumn{2}{|c|}{2,2}& & \\ \hline
    v&0&1&0&1&0&1 & &\\ \hline
    nv  & 0.133& 0.867 & 0.375 & 0.625& 0.1391& 0.8609 & & \\ \hline
    belief  &0.1135& 0.8865 & 0.8& 0.2& 0.1836& 0.8164& & \\ \hline
    \end{tabular}
\end{table}

The history based security strategy of the attacker (uninformed player) computed according to Theorem \ref{theorem: LP, history based, primal, uninformed} is provided in the first row in Table \ref{table: security strategy, uninformed}, and the computed game value is $-3.4698$ which meets the game value computed by the informed player. We then compute the regret based security strategy of attacker according to Algorithm \ref{algorithm: regret based, primal}. The regret based security strategy is the same as the history based security strategy, which demonstrate Corollary \ref{corollary: regret based, primal, uninformed}. The initial regret and updated regret given $h_t^A$ is provided in the last row of Table \ref{table: security strategy, uninformed}.
\begin{table}
  \caption{Security strategy of uninformed player and regret update. Each element in the first row is informed player's security strategy $\tau^{*T}_t(I_t)$, and each element in the last row is the regret $\hat{\alpha}_t$ given uninformed player (attacker)'s security strategy $\tau^*$ and informed player (administrator)'s history action $I_t$}
    \label{table: security strategy, uninformed}
  \center
    \begin{tabular}{|c|cc|cc|cc|cc|}
    \hline
     $h_t^A$& \multicolumn{2}{|c|}{$\emptyset$}& \multicolumn{2}{|c|}{1}& \multicolumn{2}{|c|}{2}& \multicolumn{2}{|c|}{1,1} \\    \hline
    $\tau_t$ & 0.6657& 0.3347& 0.6617& 0.3383& 0.6617& 0.3383& 0.6875& 0.3125\\ \hline
    regret & 3.1669& 3.7729& 0.9109& 1.5038& 0.7568& 1.3498& 0.2621& 0.9496  \\ \hline \hline
    $h_t^A$&  \multicolumn{2}{|c|}{1,2} & \multicolumn{2}{|c|}{2,1}& \multicolumn{2}{|c|}{2,2}& &\\ \hline
    $\tau_t$& 0.6875& 0.3125& 0.6875& 0.3125& 0.6875& 0.3125 & & \\ \hline
    regret &0.0666& 0.7541& 0.2621& 0.9496& 0.0666& 0.7541& & \\ \hline
    \end{tabular}
\end{table}

The security strategies of both players are then used in the intrusion detection game. We considered $1000$ realizations of the game, and compute the empirical average of the attacker's payoff over those runs. This number, $-3.4204$, is comparable to the computed value of the game, $-3.4698$, which demonstrates that the strategies of both players shown in Table \ref{table: security strategy, informed} and \ref{table: security strategy, uninformed} achieves the game value, and hence are indeed the security strategies of the corresponding players.

\section{Conclusion and future work}
\label{section: conclusion}
This paper studied security strategies of both players in an asymmetric information zero-sum two-player stochastic game in which only the informed player controls the system state's evolution. We showed that security strategies exist for the informed player, which only depend on the belief, and can be computed by solving a linear program whose size is linear in the cardinality of the uninformed player's action set. A similarly computationally attractive security strategy also exists for the uninformed player, which only depends on a new object called `the regret', which can itself be efficiently computed at every step.

We are interested in extending this work to two-player Bayesian stochastic games where both players have their own private types. The main foreseen challenge in Bayesian stochastic games is to find out sufficient statistics of both players. We expect regret to play a role in this context as well, maybe in combination with a player's belief on its own type.

\bibliography{}

\begin{thebibliography}{10}
\providecommand{\url}[1]{{#1}}
\providecommand{\urlprefix}{URL }
\expandafter\ifx\csname urlstyle\endcsname\relax
  \providecommand{\doi}[1]{DOI~\discretionary{}{}{}#1}\else
  \providecommand{\doi}{DOI~\discretionary{}{}{}\begingroup
  \urlstyle{rm}\Url}\fi

\bibitem{alpcan2010network}
Alpcan, T., Ba{\c{s}}ar, T.: Network security: A decision and game-theoretic
  approach.
\newblock Cambridge University Press (2010)

\bibitem{cyberattackloss}
Barlyn, S.: {Global cyber attack could spur \$ 53 billion in losses: Lloyd's of
  London}.
\newblock
  \url{https://www.reuters.com/article/us-cyber-lloyds-report/global-cyber-attack-could-spur-53-billion-in-losses-lloyds-of-london-idUSKBN1A20AB/}
  (2017).
\newblock [Online; accessed 30-Oct-2017]

\bibitem{de1996repeated}
De~Meyer, B.: Repeated games and partial differential equations.
\newblock Mathematics of Operations Research \textbf{21}(1), 209--236 (1996)

\bibitem{koller1996efficient}
Koller, D., Megiddo, N., Von~Stengel, B.: Efficient computation of equilibria
  for extensive two-person games.
\newblock Games and economic behavior \textbf{14}(2), 247--259 (1996)

\bibitem{li2017computing}
Li, L., Langbort, C., Shamma, J.: Computing security strategies in finite
  horizon repeated bayesian games.
\newblock In: American Control Conference (ACC), 2017, pp. 3664--3669. IEEE
  (2017)

\bibitem{li2017solving}
Li, L., Langbort, C., Shamma, J.: Solving two-player zero-sum repeated bayesian
  games.
\newblock arXiv preprint arXiv:1703.01957  (2017)

\bibitem{li2014lp}
Li, L., Shamma, J.: Lp formulation of asymmetric zero-sum stochastic games.
\newblock In: Decision and Control (CDC), 2014 IEEE 53rd Annual Conference on,
  pp. 1930--1935. IEEE (2014)

\bibitem{li2015efficient}
Li, L., Shamma, J.S.: Efficient computation of discounted asymmetric
  information zero-sum stochastic games.
\newblock In: Decision and Control (CDC), 2015 IEEE 54th Annual Conference on,
  pp. 4531--4536. IEEE (2015)

\bibitem{li2017efficient}
Li, L., Shamma, J.S.: Efficient strategy computation in zero-sum asymmetric
  repeated games.
\newblock arXiv preprint arXiv:1703.01952  (2017)

\bibitem{nayyar2017information}
Nayyar, A., Gupta, A.: Information structures and values in zero-sum stochastic
  games.
\newblock In: American Control Conference (ACC), 2017, pp. 3658--3663. IEEE
  (2017)

\bibitem{nayyar2014common}
Nayyar, A., Gupta, A., Langbort, C., Ba{\c{s}}ar, T.: Common information based
  markov perfect equilibria for stochastic games with asymmetric information:
  Finite games.
\newblock IEEE Transactions on Automatic Control \textbf{59}(3), 555--570
  (2014)

\bibitem{ouyang2017dynamic}
Ouyang, Y., Tavafoghi, H., Teneketzis, D.: Dynamic games with asymmetric
  information: Common information based perfect bayesian equilibria and
  sequential decomposition.
\newblock IEEE Transactions on Automatic Control \textbf{62}(1), 222--237
  (2017)

\bibitem{renault2006value}
Renault, J.: The value of {Markov} chain games with lack of information on one
  side.
\newblock Mathematics of Operations Research \textbf{31}(3), 490--512 (2006)

\bibitem{rosenberg1998duality}
Rosenberg, D.: Duality and markovian strategies.
\newblock International Journal of Game Theory \textbf{27}(4), 577--597 (1998)

\bibitem{rosenberg2004stochastic}
Rosenberg, D., Solan, E., Vieille, N.: Stochastic games with a single
  controller and incomplete information.
\newblock SIAM journal on control and optimization \textbf{43}(1), 86--110
  (2004)

\bibitem{sinha2016structured}
Sinha, A., Anastasopoulos, A.: Structured perfect bayesian equilibrium in
  infinite horizon dynamic games with asymmetric information.
\newblock In: Communication, Control, and Computing (Allerton), 2016 54th
  Annual Allerton Conference on, pp. 256--263. IEEE (2016)

\bibitem{sion1958general}
Sion, M.: On general minimax theorems.
\newblock Pacific Journal of mathematics \textbf{8}(1), 171--176 (1958)

\bibitem{sorin2002first}
Sorin, S.: A first course on zero-sum repeated games, vol.~37.
\newblock Springer (2002)

\bibitem{zamir1992repeated}
Zamir, S.: Repeated games of incomplete information: Zero-sum.
\newblock Handbook of Game Theory with Economic Applications \textbf{1},
  109--154 (1992)

\end{thebibliography}

\end{document}